\newtheorem{prop}{Proposition}
\newtheorem{remark}{Remark}
\newtheorem{lemma}{Lemma}
\newtheorem{theorem}{Theorem}
\newcommand{\be}{\boldsymbol{e}}
\newcommand{\bk}{\boldsymbol{k}}
\newcommand{\bu}{\boldsymbol{u}}
\newcommand{\br}{\boldsymbol{r}}
\newcommand{\bs}{\boldsymbol{s}}
\newcommand{\bt}{\boldsymbol{t}}
\newcommand{\bv}{\boldsymbol{v}}
\newcommand{\bi}{\boldsymbol{i}}
\newcommand{\bsig}{\boldsymbol{\sigma}}
\newcommand{\bI}{\boldsymbol{I}}
\newcommand{\bmI}{\boldsymbol{\mathcal{I}}}
\newcommand{\bE}{\boldsymbol{E}}
\newcommand{\bH}{\boldsymbol{H}}
\newcommand{\bzero}{\boldsymbol{0}}
\newcommand{\bLam}{\boldsymbol{\Lambda}}
\newcommand{\langx}{\mathbb{\Langle}}
\newcommand{\rangx}{\mathbb{\Rangle}}
\newcommand{\shh}{\!=\!}
\newcommand{\bT}{\boldsymbol{T}}
\newcommand{\bR}{\boldsymbol{R}}
\newcommand{\bmT}{\boldsymbol{\mathcal{T}}}
\newcommand{\bmR}{\boldsymbol{\mathcal{R}}}
\newcommand{\bmII}{\boldsymbol{\mathcal{I}}}
\newcommand{\bxi}{\boldsymbol{\xi}}
\newcommand{\bzeta}{\boldsymbol{\zeta}}
\newcommand{\bbeta}{\boldsymbol{\beta}}
\newcommand{\leftw}[1]{\overleftharpoon{#1}}
\newcommand{\rightw}[1]{\overrightharpoon{#1}}
\newcommand{\tran}{^{\text{\tiny T}}}
\newcommand{\sip}{\!\cdot\!}
\newcommand{\hh}{\hspace*{0.3pt}}
\newcommand{\nn}{\hspace*{-0.3pt}}
\newcommand{\es}{\hspace*{0.7pt}}
\newcommand{\nes}{\hspace*{-0.7pt}}
\newcommand{\dY}{\,\text{d}Y_{\nes\bxi}}
\newcommand{\Ereg}{\boldsymbol{\mathcal{E}}}
\newcommand{\jsup}[1]{^{\hh\text{\tiny [#1]}}}
\begin{document}

\title{Plug-and-play analytical paradigm for the scattering of plane waves by ``layer-cake'' periodic systems}

\author[1]{Prasanna Salasiya}
\author[2]{Shixu Meng}
\author[1]{Bojan B. Guzina\thanks{Corresponding author (guzin001@umn.edu)}} 
\affil[1]{\small{Civil, Environmental, \& Geo- Engineering, University of Minnesota Twin Cities, Minneapolis, MN, U.S.}}
\affil[2]{{\small Department of Mathematics, Virginia Tech, Blacksburg, VA 24061, U.S.}}

\date{}

\maketitle

\begin{abstract}
\noindent We investigate the scattering of scalar plane waves in two dimensions by a heterogeneous layer that is periodic in the direction parallel to its boundary. On describing the layer as a union of periodic laminae, we develop a solution of the scattering problem by merging the concept of propagator matrices and that of Bloch eigenstates featured by the unit cell of each lamina. The featured Bloch eigenstates are obtained by solving the quadratic eigenvalue problem (QEVP) that seeks a complex-valued wavenumber normal to the layer boundary given (i) the excitation frequency, and (ii) real-valued wavenumber parallel to the boundary -- that is preserved throughout the system. Spectral analysis of the QEVP reveals sufficient conditions for discreteness of the eigenvalue spectrum and the fact that all eigenvalues come in complex-conjugate pairs. By deploying the factorization afforded by the propagator matrix approach, we demonstrate that the contribution of individual eigenvalues (and so eigenmodes) to the solution diminishes exponentially with absolute value of their imaginary part, which then forms a rational basis for truncation of the factorized Bloch-wave solution. The proposed methodology caters for the optimal design of rainbow traps, energy harvesters, and metasurfaces, whose potency to manipulate waves is decided not only by the individual dispersion characteristics of the component laminae, but also by ordering and generally fitting of the latter into a composite layer. Using the factorized Bloch approach, evaluation of trial configurations -- as generated by the permutation and window translation/stretching of the component laminae -- can be accelerated by decades.
\end{abstract}

\maketitle 

\section{Introduction}

Since the pioneering work by Veselago~\citep{Vesel1964}, a mounting variety of metamaterial structures have been developed \citep[e.g.][]{Liu2011,Cumme2016,Kadi2019} to manipulate waves in the sub-wavelength regime, where the wavelength exceeds the characteristic ``microscopic'' lengthscale of an engineered composite. In most situations, metamaterials are designed -- for the purposes of e.g.  energy harvesting, vibration isolation, sensing, communication, and cloaking~\citep{jiao2023} -- through the prism of their spectral characteristics (specifically the frequency vs.~wavenumber map) which postulates an infinite periodic medium. Yet, their utility as wave manipulation tools is realized via \emph{bounded subsets} of such infinite constructs and their interaction with (each other and) the environment; keen examples include attenuation via rainbow traps \citep{Zhu2013,Tian2017}, redirection by metasurfaces \citep{Chen2016,Asso2018}, and near-lossless steering by topological wave protection~\citep{Zane2020}. As a result, the gap between system performance and spectral characteristics of its meta-components -- that is a key hurdle in the way of optimal design -- is normally bridged by either cyber of physical simulations. Such is the case for instance with rainbow trapping, where it was found~\citep{celli2019,chap2020} that the spatial permutation of units within a system may notably elevate the system performance. Nonetheless, the high cost of underpinning computer (or physical) simulations severely constricts the affordable design space and so limits the potency of metamaterial-based wave manipulation tools.

In this vein, our work focuses on the scattering by ``layer-cake'' periodic systems -- relevant to e.g. rainbow trapping and diffraction by metasurfaces -- and the way in which the Bloch wave analysis of the component periodic media enables \emph{modular}, plug-and-play evaluation of the system performance and so computationally-efficient optimal design. Among early treatments of the plane-wave reflection and transmission at interfaces between a homogeneous and periodic half-space via the Bloch wave paradigm are the works by Laude and coworkers~\citep{laude2009evanA,laude2009evanB}, which include consideration of corrugated interfaces~\citep{laude2011bloch}. A rigorous analysis of the radiation condition relevant to this class of problems can be found in~\citep{lama2018,fliss2016}. More recently, the featured reflection-transmission problem was revisited in a systematic manner \citep[e.g.][]{Kulpe2014,Kulpe2015,Sriva2020} that features Fourier series expansion of the interfacial conditions, applied to suitably factored (incident, reflected, and transmitted) wavefields whose traces on the contact surface follow periodicity of the heterogeneous half-space. Extensions of the joined-half-spaces problem to scattering by a strip of a periodic medium (see Fig.~\ref{strips}(b)) can be found in \citep{kulpe2014det,Kulpe2015}. Rainbow trapping by a finite sequence of \emph{graded} (C-shaped) resonator arrays was recently investigated in~\citep{Benn2019}; we shall revisit this study shortly. 

To better understand the mechanics of wave control by -- and help the optimal design of -- ``flat'' periodic meta-structures, we focus on the scattering by a periodic layer designed as a one-dimensional sequence of \emph{bonded strips} of dissimilar 2D periodic media, see Fig.~\ref{strips}(c). In this setting, we develop a solution of the scattering problem by deploying the concept of propagator matrices where the Bloch eigenstates of each member periodic medium serve as the carriers of phase and energy through the respective strip. Here, the Bloch eigenstates solve the quadratic eigenvalue problem (QEVP) that seeks a complex-valued wavenumber~$k_1$ normal to the strip boundary given (i) the excitation frequency~$\omega$, and (ii) real-valued wavenumber~$k_2$ parallel to the boundary -- that must be preserved throughout the system~\citep{lamacz2018outgoing}. Despite its central role to the problem, however, little is known \citep{Kulpe2014,Sriva2020} about the spectral characteristics of the featured QEVP. Given the newly obtained result that the eigenvalues $k_1(\omega,k_2)$ come in complex-conjugate pairs which allows for both exponential growth and decay of the solution in the 1-direction, we express the wave motion in the strip in terms of the Bloch waves that \emph{propagate} (for~$k_1$ real) or \emph{decay} (for~$k_1$ complex) either ``to the left'' or ``to the right''; a feature that caters for the development of a stable factorized Bloch scheme featuring well-conditioned transfer matrices. Thanks to the factorization afforded by such propagator approach, we demonstrate explicitly that the contribution of individual eigenvalues (and respective eigenmodes) of individual strips to the global solution diminishes exponentially with the absolute value of the imaginary part of~$k_1$, which then forms a rational basis for truncation of the factorized Bloch-wave solution. In this way, evaluation of trial layer-cake configurations targeting given 2D functionality (e.g. wave attenuation or redirection)  -- as generated by the permutation and window translation/stretching of the component strips -- can be accelerated by decades for it entails only low-dimensional (propagator) matrix manipulations. 

For completeness, we note that the method of transfer matrices in the context of rainbow trapping was recently deployed in~\citep{Benn2019}. There, the wavefield in each strip (an infinite stack of C-shaped diffraction gratings) was resolved via plane-wave expansion that lends itself to the solution of a linear eigenvalue problem seeking $\omega(k_1,k_2)$ for~$k_{1/2}$ real. In such framework, evanescent Bloch waves signifying the boundary layers due to strip interfaces are implicitly accounted for by way of transfer matrices featuring complex eigenvalues. By contrast, the present study relies on the Bloch-wave expansion that, via the linchpin QEVP, exposes both propagating and evanescent Bloch waves from the onset. By providing such natural basis for the expansion of wavefields in layer-cake periodic systems, the proposed solution can be viewed as a congenital model order reduction~\citep{Schi2008} that features fast convergence and modular evaluation structure catering for the optimal design of this class of wave manipulation devices. 

\section{Wave motion in an infinite periodic medium}\label{trans}

With reference to a Cartesian coordinate system endowed with an orthonormal vector basis~$\be_j$ ($j\shh \overline{1,2}$), consider a two-dimensional scalar wave equation 
\begin{equation}\label{2DSW}
\nabla \sip (G(\bxi) \nabla u) + \omega^2\rho(\bxi)u \:=\: 0, \qquad \bxi \in \mathbb{R}^2
\end{equation}
at frequency~$\omega$, where $G$ and $\rho$ are $Y$-periodic; 
\[
Y = \{\bxi: 0 < \nes\xi_1\nes < \ell, \: 0 < \nes\xi_2\nes < d \}, \qquad \xi_j = \bxi\sip\be_j 
\]
is the unit cell of periodicity (see Fig.\ref{strips}(a)), and the wavefield $u(\bxi)$ carries implicit time dependence~$e^{-i\omega t}$. Hereon, we assume that $G$ and~$\rho$ are real-valued $L^\infty(Y)$ functions bounded from below away from zero in that $G_{\inf} \leqslant G \leqslant G_{\sup}$ and $\rho_{\inf} \leqslant \rho \leqslant \rho_{\sup}$ for some positive constants $G_{\inf}, G_{\sup}, \rho_{\inf}$ and~$\rho_{\sup}$. In the case of horizontally-polarized (SH) elastic wave motion -- that is assumed here for the sake of discussion, $G,\rho$ and $u$ signify respectively the shear modulus, mass density, and transverse displacement. When interpreted in the context of acoustics, on the other hand $G,\rho$ and $u$ stand respectively for the reciprocal mass density, reciprocal bulk modulus, and pressure.  Recalling the Floquet-Bloch theorem \citep{Floq1883,Bloch1929}, we write
\begin{equation} \label{FBT}
u(\bxi) \:=\: \phi(\bxi) e^{i \bk \cdot \bxi}, \qquad \phi:\: \text{$Y$-periodic}
\end{equation}
where $\bk\in\mathbb{C}^2$ is the germane wave vector and $\phi$ depends implicitly on~$\omega$ and~$\bk$. On substituting~\eqref{FBT} in~\eqref{2DSW}, we obtain 
\begin{align} \label{2DFB}
\begin{aligned}
\nabla_{\!\bk}  \sip (G(\bxi) &\nabla_{\!\bk} \phi) + \omega^2 \rho(\bxi) \phi = 0, \quad \bxi \in Y \\*[1mm]
\phi|_{\xi_j=0} = \phi|_{\xi_j=\ell_j}, &\quad
\boldsymbol{n} \sip G\nabla_{\!\bk}\phi|_{\xi_j=0} = -\boldsymbol{n} \sip G\nabla_{\!\bk}\phi|_{\xi_j=\ell_j}, \quad j\shh \overline{1,2}
 \end{aligned}
\end{align}
where $\nabla_{\!\bk}  = \nabla \!+ {i} \bk$; $\,\hh \ell_1\!=\ell$, $\,\hh \ell_2\!=d$,  and $\boldsymbol{n}$ is the unit outward normal on~$\partial{Y}$. 

\begin{figure}
\centering{\includegraphics [width=1.0\textwidth ]{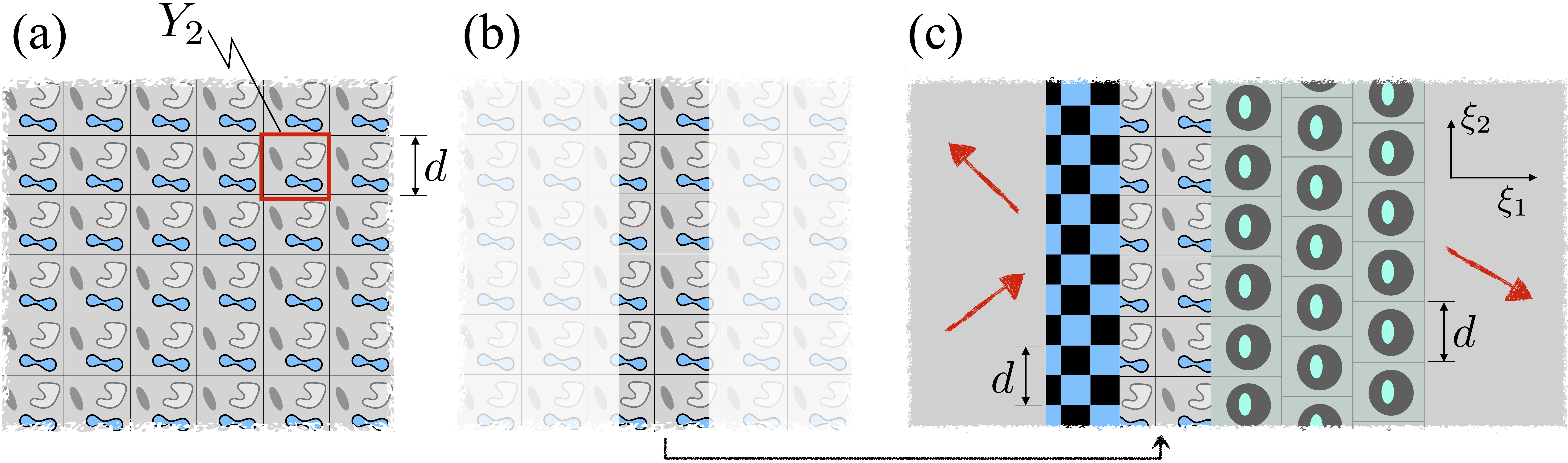}} 
\caption{(a) 2D periodic medium $\mathcal{M}_2$ endowed with unit cell~$Y_2$; (b) strip of $\mathcal{M}_2$, and (c) periodic layer~$\Omega$ comprising five bonded strips cut out of periodic media $\mathcal{M}_q$ ($q=\overline{1,3}$) that are each supported on an orthogonal lattice and share the lengthscale of periodicity, $d$, in the $\xi_2$-direction. The last three laminae of~$\Omega$ are constructed by the $\xi_2$-translation of a strip of~$\mathcal{M}_3$.}\label{strips}
\end{figure}
\subsection{Quadratic eigenvalue problem (QEVP)} \label{SecQEVP}

Let~$\bk=(k_1\!:=\kappa,k_2)$, and consider the Hilbert spaces
\begin{eqnarray*}
L_p^2(Y)  &\!\shh \!&  \{g\in L^2(Y):\, (g, g) < \infty, \,\,g|_{x_j=0} = g|_{x_j=\ell_j}\}, \\
H^1_{p} (Y) &\! \shh \!& \{g\in L^2(Y):\, \nabla g \in (L^2(Y))^2, \,\,g|_{x_j=0} = g|_{x_j=\ell_j}\}
\end{eqnarray*}
where~$(g,h)=\int_{Y}g(\bxi)\es\overline{h}(\bxi)\dY$ and $\overline{h}$ is the complex conjugate of~$h$. 

To facilitate the ensuing developments, it is instructive to consider a quadratic eigenvalue problem~\citep{laude2009evanA,Sriva2020} that consists in seeking the eigenvalues~$\kappa_n(\omega, k_2)\in\mathbb{C}$ and eigenfunctions~$\phi_n(\bxi; \omega, k_2)\in H_p^1(Y)$ solving~\eqref{2DFB} for given $\omega\in\mathbb{R}^+$ and~$k_2\!\in\mathbb{R}$. Writing the featured quadratic  eigenvalue problem (QEVP) in weak form, we obtain 
\begin{multline} \label{QEP}
\kappa_n^2\!\! \int_Y G \phi_n\es \overline{\psi} \,\text{d}Y_{\bxi} \,-\, 
\kappa_n \left\{ i\!\! \int_Y G \left( \frac{\partial \phi_n}{\partial \xi_1} \overline{\psi} - \phi_n \frac{\partial \bar{\psi}}{\partial \xi_1} \right)\dY \right\} \,+\, 
\left\{ k_2^2 \!\int_Y G \phi_n\es\overline{\psi}\dY \right. - \\ \left. k_2 \es i \!\!\int_Y G \left( \frac{\partial \phi_n}{\partial \xi_2} \overline{\psi} - \phi_n \frac{\partial \bar{\psi}}{\partial \xi_2} \right)\dY \,-\, \omega^2\!\! \int_Y \rho \phi_n\es \overline{\psi} \dY + \!\int_Y G \nabla \phi_n \sip \nabla \overline{\psi} \dY \right\}  \:=\: 0, ~~~ \forall\psi(\bxi) \in H_p^1(Y).
\end{multline}
Depending on the nature of $\kappa_n$, field $u=\phi_n(\bxi)e^{i (\kappa_n,k_2) \cdot \bxi}$ signifies either a bulk (i.e.~Bloch) wave propagating in the $\bk$-direction (when $\kappa_n\!\in\mathbb{R}$), an evanescent wave propagating in the~$\xi_1$-direction (when $\kappa_n\!\in i\mathbb{R}$), or a ``leaky'' evanescent wave propagating in the~$\xi_1$-direction (when $\kappa_n\!\in\mathbb{C}$).

\subsubsection{Spectral properties of the QEVP}

Letting $\langle \cdot,\cdot \rangle$ denote the inner product affiliated with $H^1_{p} (Y)$, we introduce operators
$\mathcal{A}:H^1_{p} (Y)  \mapsto H^1_{p} (Y) $, $\, \mathcal{B}:H^1_{p}
(Y)  \mapsto H^1_{p} (Y) $ and $\, \mathcal{C}:H^1_{p} (Y)  \mapsto
H^1_{p} (Y)$ respectively by  
\begin{eqnarray}
\langle \mathcal{A} \phi, \psi \rangle &:=&  k_2^2 \!\int_Y G \phi\es\overline{\psi}\dY   - k_2 \es i \!\!\int_Y G \left( \frac{\partial \phi}{\partial \xi_2} \overline{\psi} - \phi \frac{\partial \bar{\psi}}{\partial \xi_2} \right)\dY \nonumber \\
&&\,-\, \omega^2\!\! \int_Y \rho \phi\es \overline{\psi} \dY + \!\int_Y G \nabla \phi \sip \nabla \overline{\psi} \dY, \quad \forall \phi,\psi \in H^1_{p} (Y),\\
\langle \mathcal{B} \phi, \psi \rangle &:=& -  i\!\! \int_Y G \left( \frac{\partial \phi}{\partial \xi_1} \overline{\psi} - \phi \frac{\partial \bar{\psi}}{\partial \xi_1} \right)\dY,\quad \forall \phi,\psi \in H^1_{p} (Y),\\
\langle \mathcal{C} \phi, \psi \rangle &:=& \int_Y G \phi\es \overline{\psi} \,\text{d}Y_{\bxi},\quad \forall \phi,\psi \in H^1_{p} (Y).
\end{eqnarray}
To aid the analysis, we also define an auxiliary map $\widetilde{\mathcal{A}}:H^1_{p} (Y)  \mapsto
H^1_{p} (Y) $ via 
\begin{multline}
\langle \widetilde{\mathcal{A}} \phi, \psi \rangle := \!\int_Y G \nabla \phi \sip \nabla \overline{\psi} \dY +  \left( |k_2|^2\frac{2G^2_{\sup}}{G_{\inf}}+1 \right) \!\! \int_Y   \phi\es \overline{\psi} \dY - k_2 \es i \!\!\int_Y G \left( \frac{\partial \phi}{\partial \xi_2} \overline{\psi} - \phi \frac{\partial \bar{\psi}}{\partial \xi_2} \right)\dY, \\ 
\quad \forall \phi,\psi \in H^1_{p} (Y).
\end{multline}
Using the above definitions, QEVP \eqref{QEP} can be rewritten as
\begin{equation} \label{QEP operator form}
 \mathcal{A} \phi_n \,+\, \kappa_n \hh \mathcal{B} \phi_n \,+\, \kappa_n^2 \hh \mathcal{C} \phi_n \,=\, 0.
\end{equation}
On introducing a quadratic operator pencil \citep{Markus1988} 
\begin{equation}\label{QEP operator pencil}
\mathcal{T}(\lambda) \,:=\, \mathcal{A}  + \lambda  \mathcal{B}   + \lambda^2  \mathcal{C},
\end{equation}
our aim to analyze the spectral properties of $\mathcal{T}(\lambda)$. To this end, we establish the following claims.

\begin{lemma}  \label{lemma1}
Operator $ \widetilde{\mathcal{A}}$  is coercive and satisfies the monotonicity relation
\begin{equation}
\widetilde{\mathcal{A}}_{\inf} \,\leqslant\, \widetilde{\mathcal{A}} \,\leqslant\,
\widetilde{\mathcal{A}}_{\sup}, 
\end{equation}
where
 $\widetilde{\mathcal{A}}_{\inf}:H^1_{p} (Y)  \mapsto H^1_{p} (Y) $ is given by
\begin{eqnarray} \label{A_inf variational}
\langle \widetilde{\mathcal{A}}_{\inf} \phi, \psi \rangle := \!\int_Y \frac{G_{\inf}}{2} \nabla \phi \sip \nabla \overline{\psi} \dY +   \!\! \int_Y   \phi\es \overline{\psi} \dY, \quad \forall \phi,\psi \in H^1_{p} (Y)
\end{eqnarray}
and $\, \widetilde{\mathcal{A}}_{\sup}:H^1_{p} (Y)  \mapsto H^1_{p} (Y) $ satisfies
\begin{eqnarray}
\langle \widetilde{\mathcal{A}}_{\sup} \phi, \psi \rangle := \!\int_Y \big(G_{\sup}+\frac{G_{\inf}}{2} \big) \nabla \phi \sip \nabla \overline{\psi} \dY +  \left( 2|k_2|^2\frac{2G^2_{\sup}}{G_{\inf}}+1 \right) \!\! \int_Y   \phi\es \overline{\psi} \dY, ~~ \forall \phi,\psi \in H^1_{p} (Y).
\end{eqnarray}
\end{lemma}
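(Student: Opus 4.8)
The plan is to reduce the two operator inequalities to pointwise estimates on the associated Hermitian forms evaluated at $\psi=\phi$. First I would set $\psi=\phi$ in the definition of $\widetilde{\mathcal{A}}$ and observe that the anti-symmetric cross term $-k_2 i\int_Y G(\partial_{\xi_2}\phi\,\overline{\phi}-\phi\,\partial_{\xi_2}\overline{\phi})\dY$ equals $2k_2\int_Y G\,\mathrm{Im}(\partial_{\xi_2}\phi\,\overline{\phi})\dY$ and is therefore real; a short computation using $k_2\in\mathbb{R}$ likewise shows that the full sesquilinear form is Hermitian, so that $\widetilde{\mathcal{A}}$, $\widetilde{\mathcal{A}}_{\inf}$ and $\widetilde{\mathcal{A}}_{\sup}$ are all self-adjoint and the ordering asserted in the lemma is to be read in the sense of quadratic forms, $\langle\widetilde{\mathcal{A}}_{\inf}\phi,\phi\rangle \le \langle\widetilde{\mathcal{A}}\phi,\phi\rangle \le \langle\widetilde{\mathcal{A}}_{\sup}\phi,\phi\rangle$ for all $\phi\in H^1_p(Y)$. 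Continuity of each form (from Cauchy--Schwarz together with $G\le G_{\sup}$) guarantees that the operators are well defined and bounded on $H^1_p(Y)$.

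The crux is a single weighted estimate of the cross term. Writing $2|k_2|\,G\,|\partial_{\xi_2}\phi|\,|\phi| = 2\big(\sqrt{G_{\inf}/2}\,|\partial_{\xi_2}\phi|\big)\big(|k_2|\,G\sqrt{2/G_{\inf}}\,|\phi|\big)$ and applying $2ab\le a^2+b^2$, then using $G\le G_{\sup}$ in the second factor, yields the pointwise bound $2|k_2|\,G\,|\partial_{\xi_2}\phi|\,|\phi| \le \tfrac{G_{\inf}}{2}|\partial_{\xi_2}\phi|^2 + |k_2|^2\tfrac{2G_{\sup}^2}{G_{\inf}}|\phi|^2$; this is precisely where the constant $2G_{\sup}^2/G_{\inf}$ featured in $\widetilde{\mathcal{A}}$ originates. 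Integrating over $Y$ gives $\big|2k_2\int_Y G\,\mathrm{Im}(\partial_{\xi_2}\phi\,\overline{\phi})\dY\big| \le \int_Y \tfrac{G_{\inf}}{2}|\partial_{\xi_2}\phi|^2\dY + |k_2|^2\tfrac{2G_{\sup}^2}{G_{\inf}}\int_Y|\phi|^2\dY$.

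For the lower bound I would insert this estimate with the minus sign into $\langle\widetilde{\mathcal{A}}\phi,\phi\rangle$; the $|k_2|^2$-weighted $|\phi|^2$ contributions then cancel exactly, leaving $\int_Y\big(G|\nabla\phi|^2-\tfrac{G_{\inf}}{2}|\partial_{\xi_2}\phi|^2\big)\dY + \int_Y|\phi|^2\dY$, and invoking $G\ge G_{\inf}$ together with $|\partial_{\xi_2}\phi|^2\le|\nabla\phi|^2$ bounds the first integrand below by $\tfrac{G_{\inf}}{2}|\nabla\phi|^2$. This reproduces $\langle\widetilde{\mathcal{A}}_{\inf}\phi,\phi\rangle$ and, since the latter dominates $\min(G_{\inf}/2,1)\,\|\phi\|_{H^1_p(Y)}^2$, simultaneously delivers coercivity. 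The upper bound follows symmetrically: inserting the estimate with the plus sign and using $G\le G_{\sup}$ on the gradient term together with $|\partial_{\xi_2}\phi|^2\le|\nabla\phi|^2$ raises the gradient coefficient to $G_{\sup}+\tfrac{G_{\inf}}{2}$ and doubles the $|k_2|^2$-weighted term, yielding exactly $\langle\widetilde{\mathcal{A}}_{\sup}\phi,\phi\rangle$.

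The one genuinely delicate point is the calibration of the Young inequality: the weights must be chosen so that the gradient remainder is controllable by the factor $G_{\inf}/2$ (which is what lets $|\partial_{\xi_2}\phi|^2\le|\nabla\phi|^2$ be absorbed without destroying positivity) while the $|\phi|^2$ coefficient lands exactly on $2G_{\sup}^2/G_{\inf}$. Everything else is bookkeeping; notably, no compactness, spectral theory, or Poincar\'e-type inequality is required, since the unit $\int_Y|\phi|^2$ term already renders $\widetilde{\mathcal{A}}_{\inf}$ coercive on the full $H^1_p(Y)$ norm.
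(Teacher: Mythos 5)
Your proof is correct and follows essentially the same route as the paper: both arguments reduce the operator inequalities to quadratic-form estimates at $\psi=\phi$ and bound the cross term by Young's inequality, your explicit weights $a=\sqrt{G_{\inf}/2}\,|\partial_{\xi_2}\phi|$, $b=|k_2|\,G\sqrt{2/G_{\inf}}\,|\phi|$ being exactly the paper's choice of Young parameter $\tau = 2|k_2|G_{\sup}/G_{\inf}$. The only (cosmetic) differences are that you work pointwise before integrating, which lets your estimate cover $k_2=0$ without the paper's separate case distinction, and that you spell out coercivity via $\min(G_{\inf}/2,1)$ explicitly.
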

\begin{proof}
If $k_2=0$, the monotonicity relation can be directly established. Hereon we prove the lemma assuming $k_2\not=0$. We first recall that $G_{\inf} \leqslant G \leqslant G_{\sup}$ for some positive $G_{\inf}$ and $G_{\sup}$. By Young's inequality, for any $\phi \in H^1_p(Y)$ we accordingly have
\[
\left|\int_Y G \left( \frac{\partial \phi}{\partial \xi_2} \overline{\phi} - \phi\frac{\partial \bar{\phi}}{\partial \xi_2} \right)\dY\right| \:\leqslant\: 2 G_{\sup} \left(\frac{1}{2\tau}\|\nabla \phi\|^2_{L^2(Y)} + \frac{\tau}{2}\|\phi\|^2_{L^2(Y)} \right), \quad \tau = \frac{2|k_2|G_{\sup}}{ G_{\inf}},
\]
provided that $\tau\not=0$, i.e. $k_2\not=0$. This demonstrates that for any $\phi \in H^1_p(Y)$
\begin{eqnarray*}
\langle \widetilde{\mathcal{A}} \phi, \phi \rangle \!\!\!&\geqslant&\!\!\! G_{\inf}\|\nabla \phi\|^2_{L^2(Y)} + \left(|k_2|^2\frac{2G^2_{\sup}}{G_{\inf}}+1 \right)\|\phi\|^2_{L^2(Y)} -|k_2|2 G_{\sup} \left(\frac{1}{2\tau}\|\nabla \phi\|^2_{L^2(Y)} + \frac{\tau}{2}\|\phi\|^2_{L^2(Y)} \right) \\
\!\!\!\!&=&\!\!\!\! \frac{G_{\inf}}{2}\|\nabla \phi\|^2_{L^2(Y)} + \|\phi\|^2_{L^2(Y)} \:=\: \langle \widetilde{\mathcal{A}}_{\inf} \phi, \phi \rangle,
\end{eqnarray*}  
and 
\begin{eqnarray*}
\langle \widetilde{\mathcal{A}} \phi, \phi \rangle \!\!\!\!&\leqslant&\!\!\!\! G_{\sup}\|\nabla \phi\|^2_{L^2(Y)} + \left(|k_2|^2\frac{2G^2_{\sup}}{G_{\inf}}+1 \right)\|\phi\|^2_{L^2(Y)} +|k_2|2 G_{\sup} \left(\frac{1}{2\tau}\|\nabla \phi\|^2_{L^2(Y)} + \frac{\tau}{2}\|\phi\|^2_{L^2(Y)} \right) \\
\!\!\!\!\!&=&\!\!\!\!\! \big(G_{\sup}+\frac{G_{\inf}}{2} \big)\|\nabla \phi\|^2_{L^2(Y)} +\left(2|k_2|^2\frac{2G^2_{\sup}}{G_{\inf}}+1 \right) \|\phi\|^2_{L^2(Y)} \:=\: \langle \widetilde{\mathcal{A}}_{\sup} \phi, \phi \rangle.
\end{eqnarray*}  
\end{proof}

\begin{lemma} \label{lemma2}
Operators $\mathcal{A}:H^1_{p} (Y)  \mapsto H^1_{p} (Y) $, $\, \mathcal{B}:H^1_{p} (Y)  \mapsto H^1_{p} (Y) $ and $\,\mathcal{C}:H^1_{p} (Y)  \mapsto H^1_{p} (Y) $ are self-adjoint.
\end{lemma}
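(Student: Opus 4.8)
The plan is to exploit the fact that each of $\mathcal{A}$, $\mathcal{B}$, $\mathcal{C}$ is a bounded operator on $H^1_p(Y)$ represented through a sesquilinear form, so that self-adjointness is equivalent to Hermitian symmetry of that form, i.e. $\langle \mathcal{T}\phi, \psi\rangle = \overline{\langle \mathcal{T}\psi, \phi\rangle}$ for all $\phi,\psi \in H^1_p(Y)$ (this follows from $\langle \phi, \mathcal{T}\psi\rangle = \overline{\langle \mathcal{T}\psi, \phi\rangle}$). First I would establish this identity for $\mathcal{C}$: since $G$ is real-valued, conjugating $\langle \mathcal{C}\psi, \phi\rangle = \int_Y G \psi\es\overline{\phi}\dY$ and using $\overline{G}=G$ returns $\int_Y G \phi\es\overline{\psi}\dY = \langle \mathcal{C}\phi, \psi\rangle$, so $\mathcal{C}$ is self-adjoint.

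Next I would treat $\mathcal{B}$ and, in anticipation of $\mathcal{A}$, the structurally identical $\xi_2$-derivative term. The key observation is that the bracket $\int_Y G(\frac{\partial \phi}{\partial \xi_1}\overline{\psi} - \phi\frac{\partial \bar{\psi}}{\partial \xi_1})\dY$ is antisymmetric under the exchange $\phi\leftrightarrow\psi$ combined with complex conjugation (again using $\overline{G}=G$), so that multiplication by the prefactor $-i$ renders it Hermitian. Concretely, conjugating $\langle \mathcal{B}\psi, \phi\rangle$ and relabeling yields $\langle \mathcal{B}\phi, \psi\rangle$; I would carry out this short computation explicitly, since the cancellation of signs between the $-i$ prefactor and the conjugation is the one place where a slip could occur.

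Finally, for $\mathcal{A}$ I would decompose its defining form into its four constituents. The terms $k_2^2\int_Y G\phi\es\overline{\psi}\dY$ and $-\omega^2\int_Y\rho\,\phi\es\overline{\psi}\dY$ are Hermitian by the same argument as $\mathcal{C}$, because $k_2$, $\omega$, $G$, and $\rho$ are all real; the term $\int_Y G\nabla\phi\sip\nabla\overline{\psi}\dY$ is Hermitian since the dot product is symmetric and $G$ is real; and the remaining term is $-k_2$ times the $\xi_2$-analogue of the bracket already handled for $\mathcal{B}$, hence Hermitian since $k_2\in\mathbb{R}$. As a finite sum of Hermitian forms, $\mathcal{A}$ is therefore self-adjoint. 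I do not anticipate any genuine obstacle here: the result rests entirely on the real-valuedness of the material coefficients and of the pair $(\omega,k_2)$, and the only step demanding care is the sign-tracking in the first-order (derivative) terms.
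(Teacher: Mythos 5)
Your proposal is correct and takes essentially the same approach as the paper: verify Hermitian symmetry of each defining sesquilinear form, with $\mathcal{C}$ immediate from the real-valuedness of $G$, $\mathcal{B}$ handled by conjugating and relabeling the antisymmetric first-order bracket so the $-i$ prefactor restores symmetry, and $\mathcal{A}$ treated analogously term by term. Your explicit four-term decomposition of $\mathcal{A}$ merely fills in what the paper dismisses as "analogous," so there is no substantive difference.
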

\begin{proof}
Since $G$ is real-valued, operator $\mathcal{C}$ is clearly self-adjoint. We next show that $\mathcal{B}$ carries the same property. Indeed, for any $\phi,\psi \in H^1_{p} (Y)$ we have
 \begin{eqnarray*}
\langle \mathcal{B}^* \phi, \psi \rangle =\langle  \phi, \mathcal{B} \psi \rangle = \overline{\langle  \mathcal{B} \psi ,\phi\rangle }= \overline{-  i\!\! \int_Y G \left( \frac{\partial \psi}{\partial \xi_1} \overline{\phi} - \psi \frac{\partial \bar{\phi}}{\partial \xi_1} \right)\dY} =  i\!\! \int_Y G \left(  \frac{\partial \bar{\psi}}{\partial \xi_1}  \phi  - \overline{\psi}\frac{\partial \phi}{\partial \xi_1}  \right)\dY=\langle \mathcal{B} \phi, \psi \rangle.
\end{eqnarray*}
The self-adjointness of $\mathcal{A}$ can be demonstrated in an analogous way.

\end{proof}

\begin{lemma}  \label{lemma3}
Operator $\mathcal{B}:H^1_{p} (Y)  \mapsto H^1_{p} (Y) $ is compact, and operator $\mathcal{T}(\lambda)$ is Fredholm of index zero.
\end{lemma}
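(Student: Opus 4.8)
The plan is to split $\mathcal{T}(\lambda)$ into a coercive ``principal part'' plus a compact remainder, and then invoke the invariance of the Fredholm index under compact perturbations. Concretely, I would write $\mathcal{T}(\lambda) = \widetilde{\mathcal{A}} + \mathcal{K}(\lambda)$, where $\widetilde{\mathcal{A}}$ is the auxiliary operator of Lemma~\ref{lemma1} and $\mathcal{K}(\lambda) := (\mathcal{A}-\widetilde{\mathcal{A}}) + \lambda\mathcal{B} + \lambda^2\mathcal{C}$. The statement then reduces to two facts: (i) each of $\mathcal{C}$, $\mathcal{B}$, and $\mathcal{A}-\widetilde{\mathcal{A}}$ is compact, so that $\mathcal{K}(\lambda)$ is compact for every fixed $\lambda$; and (ii) $\widetilde{\mathcal{A}}$ is invertible, hence Fredholm of index zero. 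Granting these, $\mathcal{T}(\lambda)$ is a compact perturbation of an index-zero Fredholm operator and is therefore itself Fredholm of index zero.

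All the compactness claims rest on the compact Sobolev embedding $H^1_p(Y) \hookrightarrow L^2(Y)$ (Rellich). The key point I would isolate as a short auxiliary principle is this: if a bounded sesquilinear form $b$ on $H^1_p(Y)\times H^1_p(Y)$ satisfies $|b(\phi,\psi)| \leqslant C\|\phi\|_{L^2(Y)}\|\psi\|_{H^1_p(Y)}$, then its Riesz-represented operator $T$ (defined via $\langle T\phi,\psi\rangle = b(\phi,\psi)$) is compact, since $\|T\phi\|_{H^1_p(Y)} \leqslant C\|\phi\|_{L^2(Y)}$, so a bounded sequence in $H^1_p(Y)$ -- made Cauchy in $L^2(Y)$ by Rellich -- is mapped to an $H^1_p(Y)$-Cauchy sequence. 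For $\mathcal{C}$ and for $\mathcal{A}-\widetilde{\mathcal{A}}$ this applies directly, because both forms are purely of $L^2$ type (multiplication by the bounded weights $G$, $\rho$ and the $k_2$-dependent constants), so that $|b(\phi,\psi)| \leqslant C\|\phi\|_{L^2(Y)}\|\psi\|_{L^2(Y)} \leqslant C\|\phi\|_{L^2(Y)}\|\psi\|_{H^1_p(Y)}$.

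The one genuinely delicate point -- and the main obstacle -- is the compactness of $\mathcal{B}$, since its form carries a first-order derivative, and a first-order operator is not compact as a map $H^1\to L^2$. The resolution is that, viewed as a map $H^1_p(Y)\to H^1_p(Y)$ through the $L^2$ pairing, $\mathcal{B}$ nonetheless inherits compactness from the embedding, but only once the derivative is moved onto the factor that is tested in $L^2$. I would split $\mathcal{B} = \mathcal{B}_1 + \mathcal{B}_2$ along the two terms of its form, with $\langle\mathcal{B}_1\phi,\psi\rangle = -i\int_Y G\,(\partial\phi/\partial\xi_1)\,\overline{\psi}\,\mathrm{d}Y$. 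This piece obeys the dual estimate $|\langle\mathcal{B}_1\phi,\psi\rangle| \leqslant G_{\sup}\|\phi\|_{H^1_p(Y)}\|\psi\|_{L^2(Y)}$, so its adjoint $\mathcal{B}_1^*$ satisfies the hypothesis of the auxiliary principle and is compact, whence $\mathcal{B}_1$ is compact as well. By the very conjugation manipulation used to establish self-adjointness in Lemma~\ref{lemma2}, the second term is $\mathcal{B}_2 = \mathcal{B}_1^*$, so that $\mathcal{B} = \mathcal{B}_1 + \mathcal{B}_1^*$ is compact.

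It then remains only to record that $\widetilde{\mathcal{A}}$ is invertible. By Lemma~\ref{lemma1} it is coercive, with $\langle\widetilde{\mathcal{A}}\phi,\phi\rangle \geqslant \langle\widetilde{\mathcal{A}}_{\inf}\phi,\phi\rangle \geqslant \min(G_{\inf}/2,\,1)\,\|\phi\|_{H^1_p(Y)}^2$, and it is bounded; by Lax--Milgram it is thus a bijection with bounded inverse, i.e. Fredholm of index zero. Combining the compactness of $\mathcal{K}(\lambda)$ with the stability of the Fredholm index under compact perturbations (Atkinson's theorem) yields $\mathrm{ind}\,\mathcal{T}(\lambda) = \mathrm{ind}\,\widetilde{\mathcal{A}} = 0$, as claimed.
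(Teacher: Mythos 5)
Your proof is correct, and its skeleton coincides with the paper's: write $\mathcal{T}(\lambda) = \widetilde{\mathcal{A}} + \big[(\mathcal{A}-\widetilde{\mathcal{A}}) + \lambda\mathcal{B} + \lambda^2\mathcal{C}\big]$ with $\widetilde{\mathcal{A}}$ invertible by coercivity, show the bracketed remainder is compact, split $\mathcal{B} = \mathcal{B}_1 + \mathcal{B}_2$ along the two terms of its form, and conclude by stability of the Fredholm property under compact perturbation. The one step where you genuinely diverge is the compactness of $\mathcal{B}_1$, the half of $\mathcal{B}$ carrying the derivative on $\phi$, which is indeed the delicate point. The paper handles it by interpreting $\int_Y G\,(\partial\phi/\partial\xi_1)\,\overline{\psi}\,\text{d}Y_{\nes\bxi}$ as a duality pairing between $H^{-1}_p(Y)$ and $H^1_p(Y)$ with $L^2_p(Y)$ as pivot space, bounding $\|\mathcal{B}_1\phi\|_{H^1_p(Y)} \leqslant \|G\,\partial\phi/\partial\xi_1\|_{H^{-1}_p(Y)}$, and invoking the compact embedding of $L^2(Y)$ into $H^{-1}_p(Y)$. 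You instead observe that $\mathcal{B}_2 = \mathcal{B}_1^*$ (the same conjugation computation as in Lemma~\ref{lemma2}), prove $\mathcal{B}_2$ compact directly from your auxiliary $L^2$-to-$H^1_p$ principle, and obtain $\mathcal{B}_1$ compact by Schauder's theorem. The two routes are dual to one another, but yours buys something concrete: it never leaves the $H^1_p$--$L^2$ setting, so no negative-order space or pivoting machinery is required — Rellich plus Schauder suffice — and it exposes the structural identity $\mathcal{B} = \mathcal{B}_1 + \mathcal{B}_1^*$, which simultaneously explains the self-adjointness of $\mathcal{B}$. The paper's route generalizes more readily to first-order terms lacking such adjoint symmetry. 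Your remaining steps (the multiplication-type bound for $\mathcal{C}$ and $\mathcal{A}-\widetilde{\mathcal{A}}$, Lax--Milgram for $\widetilde{\mathcal{A}}$, invariance of the index) match the paper's, with the minor virtue that you spell out why $\widetilde{\mathcal{A}}$ is invertible, which the paper asserts without comment.
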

\begin{proof}
We note that $ \mathcal{B} =\mathcal{B}_1 + \mathcal{B}_2$, where
 \begin{equation}
\langle \mathcal{B}_1 \phi, \psi \rangle \:=\: - i\!\! \int_Y G \frac{\partial \phi}{\partial \xi_1} \overline{\psi}  \dY,\quad  \langle \mathcal{B}_2 \phi, \psi \rangle \:=\:    i\!\! \int_Y G   \phi \frac{\partial \bar{\psi}}{\partial \xi_1} \dY,\quad \forall \phi,\psi \in H^1_{p} (Y).
\end{equation}
Since $\|\mathcal{B}_2 \phi\|_{H^1_p(Y)} \leqslant\| \phi\|_{L^2(Y)}$, we find that  $\mathcal{B}_2$ is a compact operator due to the compact
embedding from $H^1_p(Y)$ into $L^2(Y)$. We next interpret $\int_Y G \frac{\partial \phi}{\partial \xi_1} \overline{\psi}  \dY$ as the duality pairing between $H^{-1}_p(Y)$ and $H^{1}_p(Y)$ with $L^2_p(Y)$ as the pivoting space, which yields $\|\mathcal{B}_1 \phi\|_{H^1_p(Y)} \leqslant \|G \frac{\partial\phi}{\partial \xi_1}\|_{H^{-1}_p(Y)}$. As a result, $\mathcal{B}_1$ is a compact operator due to compact embedding from $L^2(Y)$ into $H^{-1}_p(Y)$. It can be demonstrated in a similar way that operators $\mathcal{A}-\widetilde{\mathcal{A}}$ and $\mathcal{C}$ are also compact. Lastly, we note that $\mathcal{T}(\lambda)$ is the sum of an invertible operator and a compact operator, whereby operator $\mathcal{T}(\lambda)$ is Fredholm of index zero.
\end{proof}

\begin{prop} \label{prop1}
If $\mathcal{T}(\lambda) \phi=0\, $ for some non-zero $\phi \in H^1_p(Y)$, then there must exist a nontrivial $\psi \in H^1_p(Y)$ such that $\mathcal{T}(\overline{\lambda}) \psi=0$. Hence all eigenvalues of~\eqref{QEP} come in complex-conjugate pairs.
\end{prop}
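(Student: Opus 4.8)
The plan is to leverage the two structural facts already in hand: the self-adjointness of $\mathcal{A}$, $\mathcal{B}$, $\mathcal{C}$ (Lemma~\ref{lemma2}) and the fact that $\mathcal{T}(\lambda)$ is Fredholm of index zero (Lemma~\ref{lemma3}). The guiding observation is that complex conjugation of the spectral parameter corresponds exactly to taking the Hilbert-space adjoint of the pencil, so that the existence of an eigenfunction at $\overline{\lambda}$ follows from the eigenfunction at $\lambda$ by a dimension count on kernels.

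First I would compute the adjoint of $\mathcal{T}(\lambda)$ with respect to the inner product $\langle\cdot,\cdot\rangle$ on $H^1_p(Y)$. Because adjunction conjugates scalar multipliers, and because $\mathcal{A}^*=\mathcal{A}$, $\mathcal{B}^*=\mathcal{B}$, $\mathcal{C}^*=\mathcal{C}$ by Lemma~\ref{lemma2}, one obtains the identity $\mathcal{T}(\lambda)^* = \mathcal{A} + \overline{\lambda}\,\mathcal{B} + \overline{\lambda}^{\,2}\,\mathcal{C} = \mathcal{T}(\overline{\lambda})$. This is the linchpin: it recasts any assertion about $\mathcal{T}(\overline{\lambda})$ as an assertion about the adjoint of $\mathcal{T}(\lambda)$.

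Next I would translate the eigenvalue condition into a statement about kernels. By Lemma~\ref{lemma3} the operator $L:=\mathcal{T}(\lambda)$ is Fredholm of index zero, hence has closed range; the orthogonal decomposition $H^1_p(Y)=\operatorname{ran}(L)\oplus\ker(L^*)$ then identifies the cokernel of $L$ with $\ker(L^*)$, and index zero gives $\dim\ker\mathcal{T}(\lambda)=\dim\operatorname{coker}\mathcal{T}(\lambda)=\dim\ker\mathcal{T}(\lambda)^*$. Substituting $\mathcal{T}(\lambda)^*=\mathcal{T}(\overline{\lambda})$ yields $\dim\ker\mathcal{T}(\lambda)=\dim\ker\mathcal{T}(\overline{\lambda})$. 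Consequently, if $\mathcal{T}(\lambda)\phi=0$ for some $\phi\neq0$ then $\dim\ker\mathcal{T}(\overline{\lambda})\geqslant1$, so there is a nontrivial $\psi$ with $\mathcal{T}(\overline{\lambda})\psi=0$; since $\lambda$ being an eigenvalue of~\eqref{QEP} means precisely $\ker\mathcal{T}(\lambda)\neq\{0\}$, the eigenvalues indeed occur in complex-conjugate pairs, with matching geometric multiplicities.

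Both ingredients are light: the adjoint identity is a one-line conjugation argument built directly on Lemma~\ref{lemma2}. The only step warranting care is the chain $\dim\ker=\dim\operatorname{coker}=\dim\ker(\text{adjoint})$, which hinges on the closed-range property supplied by the Fredholm structure of Lemma~\ref{lemma3} and on correctly identifying the cokernel with the adjoint kernel in the $H^1_p(Y)$ inner product. I expect this bookkeeping, rather than any substantive computation, to be the main and essentially only delicate point.
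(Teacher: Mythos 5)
Your proof is correct and takes essentially the same route as the paper's: both rest on the identity $\mathcal{T}(\lambda)^* = \mathcal{T}(\overline{\lambda})$ obtained from the self-adjointness in Lemma~\ref{lemma2}, combined with the Fredholm index-zero property of Lemma~\ref{lemma3}. The only difference is presentational --- the paper argues by contradiction (injectivity of $\mathcal{T}(\overline{\lambda})$ plus index zero would make $\big(\mathcal{T}(\lambda)\big)^*$ invertible, contradicting $\mathrm{Range}\big(\mathcal{T}(\lambda)\big)^* = \big(\mathrm{Null}\,\mathcal{T}(\lambda)\big)^\perp \neq H^1_p(Y)$), whereas you run the dimension count $\dim\ker\mathcal{T}(\lambda) = \dim\ker\mathcal{T}(\lambda)^*$ directly, which as a small bonus also yields equality of geometric multiplicities.
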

\begin{proof}
Let $\mathcal{T}(\lambda) \phi=0$ for some nontrivial $\phi \in H^1_p(Y)$, and assume that $\mathcal{T}(\overline{\lambda}) \psi\not=0$ for all non-zero $\psi \in H^1_p(Y)$. Then, since $\mathcal{T}(\overline{\lambda})$ is a Fredholm operator of index zero by Lemma~\ref{lemma3}, $\mathcal{T}(\overline{\lambda})$ is injective and hence invertible. Since $\mathcal{A}$, $\mathcal{B}$ and $\mathcal{C}$ are self-adjoint by Lemma~\ref{lemma2}, \eqref{QEP operator pencil} yields $\mathcal{T}(\overline{\lambda}) = \big(\mathcal{T}({\lambda})\big)^*$
whereby $\big(\mathcal{T}({\lambda})\big)^*$ is also invertible. This is a contradiction since $H^1_p(Y)=\mbox{Range} \big(\mathcal{T}({\lambda})\big)^*= \big(\mbox{Null} \mathcal{T}({\lambda})\big)^\perp \not=H^1_p(Y)$. 
\end{proof}

The next two claims focus on the discreteness of the eigenspectrum of~\eqref{QEP}, that is an implicit premise of the subsequent analysis. 
\begin{theorem}
Assume that there exists $\tau \in \mathbb{C}$ such that $\mathcal{T}(\tau)$ is injective. Then the eigenvalues $\lambda$ affiliated with the quadratic pencil $\mathcal{T}(\lambda)$  form a discrete set, with infinity being the only possible accumulation point.
\end{theorem}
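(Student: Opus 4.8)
The plan is to recognize $\mathcal{T}(\lambda)=\mathcal{A}+\lambda\mathcal{B}+\lambda^2\mathcal{C}$ as an entire operator-valued function (a polynomial operator pencil) whose values are Fredholm of index zero by Lemma~\ref{lemma3}, and then to invoke the analytic Fredholm theorem on the connected domain $\mathbb{C}$. First I would cash in the hypothesis: since $\mathcal{T}(\tau)$ is injective and Fredholm of index zero, its range is closed with codimension equal to its nullity, namely zero; hence $\mathcal{T}(\tau)$ is bijective and, by the bounded inverse theorem, boundedly invertible.

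Next I would normalize the pencil by left multiplication with $\mathcal{T}(\tau)^{-1}$. Writing $\mathcal{T}(\lambda)=\mathcal{T}(\tau)+(\lambda-\tau)\mathcal{B}+(\lambda^2-\tau^2)\mathcal{C}$, one obtains $\mathcal{T}(\tau)^{-1}\mathcal{T}(\lambda)=I+K(\lambda)$, where $K(\lambda):=(\lambda-\tau)\,\mathcal{T}(\tau)^{-1}\mathcal{B}+(\lambda^2-\tau^2)\,\mathcal{T}(\tau)^{-1}\mathcal{C}$. Because $\mathcal{B}$ and $\mathcal{C}$ are compact (Lemma~\ref{lemma3} together with the remark in its proof that $\mathcal{C}$ is compact) and $\mathcal{T}(\tau)^{-1}$ is bounded, each coefficient $\mathcal{T}(\tau)^{-1}\mathcal{B}$ and $\mathcal{T}(\tau)^{-1}\mathcal{C}$ is compact; hence $\lambda\mapsto K(\lambda)$ is an entire function valued in the compact operators on $H^1_p(Y)$, with $K(\tau)=0$.

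The crux is then the analytic Fredholm theorem \citep{Markus1988}: for an analytic compact-operator-valued family on a connected open set, either $I+K(\lambda)$ is invertible for no $\lambda$, or it is invertible off a set $S$ having no accumulation point in the domain, on which $(I+K(\lambda))^{-1}$ is meromorphic. Since $I+K(\tau)=I$ is invertible, the first alternative is excluded and $S$ is discrete in $\mathbb{C}$. Finally, because $\mathcal{T}(\tau)^{-1}$ is invertible, $\mathcal{T}(\lambda)$ fails to be invertible exactly when $I+K(\lambda)$ does, i.e. precisely on $S$; and since $\mathcal{T}(\lambda)$ is Fredholm of index zero, non-invertibility is equivalent to non-injectivity, so $S$ coincides with the eigenvalue set of~\eqref{QEP}. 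Thus the eigenvalues form a set with no accumulation point in $\mathbb{C}$, leaving $\infty$ as the only possible accumulation point.

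I expect the only genuine obstacle to be the bookkeeping around the analytic Fredholm theorem: one must verify its hypotheses (connectedness of $\mathbb{C}$, analyticity and compactness of $K(\lambda)$, and the existence of at least one point of invertibility) and then transport the conclusion from $I+K(\lambda)$ back to the original pencil $\mathcal{T}(\lambda)$ via the index-zero property. No delicate estimates are required beyond the compactness already secured in Lemma~\ref{lemma3}.
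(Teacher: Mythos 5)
Your proof is correct, and it rests on the same mathematical engine as the paper's: the holomorphic Fredholm alternative applied to the polynomial pencil, seeded by the one point $\tau$ of injectivity. The execution differs in two ways worth noting. First, the decomposition: the paper writes $\mathcal{T}(\lambda) = \widetilde{\mathcal{A}} + (\mathcal{A}-\widetilde{\mathcal{A}}) + \lambda\mathcal{B} + \lambda^2\mathcal{C}$, i.e. ``invertible plus holomorphic compact'' anchored at the coercive operator $\widetilde{\mathcal{A}}$ of Lemma~\ref{lemma1}, and then cites Theorem 12.9 of \cite{Markus1988} wholesale; you instead anchor at the hypothesis point, factoring $\mathcal{T}(\lambda) = \mathcal{T}(\tau)\bigl(I + K(\lambda)\bigr)$ with $K(\lambda) = (\lambda-\tau)\,\mathcal{T}(\tau)^{-1}\mathcal{B} + (\lambda^2-\tau^2)\,\mathcal{T}(\tau)^{-1}\mathcal{C}$, which reduces everything to the textbook analytic Fredholm theorem for $I+K(\lambda)$ with the degenerate alternative excluded by $K(\tau)=0$. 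Second, your route makes explicit the steps the paper leaves implicit: that injectivity plus index zero forces bounded invertibility of $\mathcal{T}(\tau)$, and that non-invertibility of an index-zero Fredholm operator is exactly non-injectivity, so the singular set of the pencil really is the eigenvalue set. What your version buys is self-containedness — it needs only Lemma~\ref{lemma3} (plus the compactness of $\mathcal{C}$ established inside its proof) and a classical theorem, with the role of the hypothesis laid bare; what the paper's version buys is brevity, and it reuses the $\widetilde{\mathcal{A}}$-based splitting already built for the Fredholmness argument, keeping the whole spectral analysis organized around a single decomposition.
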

\begin{proof}
We first note that $\mathcal{T}(\lambda) = \widetilde{\mathcal{A}} +  \left(\mathcal{A}-\widetilde{\mathcal{A}} \right) + \lambda  \mathcal{B}   + \lambda^2  \mathcal{C}$, where $\widetilde{\mathcal{A}}$ is invertible while  $\mathcal{A}-\widetilde{\mathcal{A}}$, $\mathcal{B}$ and $\mathcal{C}$ are compact. 
Since by premise there exists $\tau \in \mathbb{C}$ such that $\mathcal{T}(\tau)$ is injective and  $\mathcal{T}(\lambda)$ is Fredholm operator of index zero, we conclude that all eigenvalues $\lambda$  form a discrete set with infinity being the only possible accumulation point \cite[Theorem 12.9]{Markus1988}. 
\end{proof}

For a generic choice of~$G,\rho,\ell$ and~$\omega$, it is difficult (at this point) to establish the existence of~$\tau\!\in\mathbb{C}$ which ensures the injectivity of~$\mathcal{T}(\tau)$. However, there is a tractable class of ``low frequency'' configurations for which this condition is satisfied. 

\begin{lemma}
Let $\tau := \frac{\pi }{\ell}$. Then for all excitation frequencies bounded by $\omega < \tau \sqrt{G_{\inf}/\rho_{\sup}}$, we have 
\[
\Re\langle\mathcal{T}(\tau) \phi,\phi\rangle > 0 \quad \forall \phi \in H^1_{p} (Y),
\]
i.e. operator $\mathcal{T}(\tau)$ is injective.
\end{lemma}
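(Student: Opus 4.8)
The plan is to reduce $\Re\langle\mathcal{T}(\tau)\phi,\phi\rangle$ to a manifestly real, sign-definite quadratic form by completing the square in each coordinate direction, and then to invoke a sharp Poincar\'e-type inequality whose constant is pinned down exactly by the choice $\tau=\pi/\ell$. First I would note that, because $\mathcal{A},\mathcal{B},\mathcal{C}$ are self-adjoint by Lemma~\ref{lemma2}, each of $\langle\mathcal{A}\phi,\phi\rangle$, $\langle\mathcal{B}\phi,\phi\rangle$, $\langle\mathcal{C}\phi,\phi\rangle$ is real; hence $\Re\langle\mathcal{T}(\tau)\phi,\phi\rangle=\langle\mathcal{T}(\tau)\phi,\phi\rangle$ and I may work directly with the diagonal form $\langle\mathcal{A}\phi,\phi\rangle+\tau\langle\mathcal{B}\phi,\phi\rangle+\tau^2\langle\mathcal{C}\phi,\phi\rangle$. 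Writing the three forms out at $\psi=\phi$ and splitting $\int_Y G|\nabla\phi|^2\dY$ into its $\xi_1$- and $\xi_2$-parts, I would group the $\xi_1$-derivative term together with the $\tau$- and $\tau^2$-terms, and the $\xi_2$-derivative term together with the $k_2$- and $k_2^2$-terms.

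The key algebraic step is the completion of squares. Using that $\tfrac{\partial\phi}{\partial\xi_j}\overline{\phi}-\phi\tfrac{\partial\bar\phi}{\partial\xi_j}$ is purely imaginary, the $\mathcal{B}$-type and the off-diagonal $\mathcal{A}$-type terms are precisely the cross terms of two squares, and I expect the identity
\[
\langle\mathcal{T}(\tau)\phi,\phi\rangle \,=\, \int_Y G\,\Big|\frac{\partial\phi}{\partial\xi_1}+i\tau\phi\Big|^2\dY \,+\, \int_Y G\,\Big|\frac{\partial\phi}{\partial\xi_2}+ik_2\phi\Big|^2\dY \,-\, \omega^2\!\int_Y \rho\,|\phi|^2\dY,
\]
i.e. $\langle\mathcal{T}(\tau)\phi,\phi\rangle=\int_Y G\,|\nabla_{\!\bk}\phi|^2\dY-\omega^2\!\int_Y\rho\,|\phi|^2\dY$ with $\bk=(\tau,k_2)$ -- the natural Bloch ``energy'' form. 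This is the expression whose positivity I must establish.

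The crux is the Poincar\'e-type bound on the first square. I would expand $\phi$ in a Fourier series in $\xi_1$ over the period $\ell$, writing $\phi=\sum_m c_m(\xi_2)\es e^{2\pi i m\xi_1/\ell}$. Then $\tfrac{\partial\phi}{\partial\xi_1}+i\tau\phi$ has Fourier coefficients $\tfrac{i\pi}{\ell}(2m+1)\hh c_m$, and since $|2m+1|\geqslant 1$ for every integer $m$, Parseval's identity gives
\[
\Big\|\frac{\partial\phi}{\partial\xi_1}+i\tau\phi\Big\|^2_{L^2(Y)} \,\geqslant\, \frac{\pi^2}{\ell^2}\,\|\phi\|^2_{L^2(Y)} \,=\, \tau^2\,\|\phi\|^2_{L^2(Y)}.
\]
This is exactly where the choice $\tau=\pi/\ell$ is used: it places the shift halfway between consecutive reciprocal-lattice points, so the smallest shifted mode $\big|\tfrac{\pi}{\ell}(2m+1)\big|$ is bounded below by $\pi/\ell$ and never vanishes (an off-centre shift would admit a near-zero mode and kill the estimate). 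Discarding the nonnegative $\xi_2$-square and using $G\geqslant G_{\inf}$, $\rho\leqslant\rho_{\sup}$, I obtain
\[
\langle\mathcal{T}(\tau)\phi,\phi\rangle \,\geqslant\, \big(G_{\inf}\hh\tau^2-\omega^2\rho_{\sup}\big)\,\|\phi\|^2_{L^2(Y)},
\]
which is strictly positive for every nonzero $\phi$ precisely when $\omega<\tau\sqrt{G_{\inf}/\rho_{\sup}}$. Positivity of the form then forces $\mathcal{T}(\tau)\phi=0\Rightarrow\langle\mathcal{T}(\tau)\phi,\phi\rangle=0\Rightarrow\phi=0$, i.e. injectivity.

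The main obstacle I anticipate is not any single estimate but the sign bookkeeping in the completion of squares -- confirming that $\tau\langle\mathcal{B}\phi,\phi\rangle$ and the $k_2$ cross term reassemble with the correct signs into $\big|\tfrac{\partial\phi}{\partial\xi_1}+i\tau\phi\big|^2$ and $\big|\tfrac{\partial\phi}{\partial\xi_2}+ik_2\phi\big|^2$, rather than with a spurious sign that would spoil definiteness. Once the square form is secured, the remaining ingredients (Parseval, the $|2m+1|\geqslant 1$ bound, and the $L^\infty$ bounds on $G$ and $\rho$) are routine.
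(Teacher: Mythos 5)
Your proposal is correct and follows essentially the same route as the paper: both hinge on the identity $\Re\langle\mathcal{T}(\tau)\phi,\phi\rangle=\int_Y G\,|(\nabla+i\tau\boldsymbol{e}_1+ik_2\boldsymbol{e}_2)\phi|^2\,\mathrm{d}Y-\omega^2\!\int_Y\rho\,|\phi|^2\,\mathrm{d}Y$ (your completion of squares is just this identity written component-wise), followed by a Fourier--Parseval bound exploiting that the half-shift $\tau=\pi/\ell$ keeps $|2\pi m/\ell+\tau|\geqslant\pi/\ell$, and the bounds $G\geqslant G_{\inf}$, $\rho\leqslant\rho_{\sup}$. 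The only cosmetic difference is that you expand in a one-dimensional Fourier series in $\xi_1$ with $\xi_2$-dependent coefficients, whereas the paper uses the full double Fourier series and then discards the $\xi_2$-contribution.
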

\begin{proof}
For any $\phi \in H^1_{p} (Y)$, one can deploy the Fourier series expansion
$$
\phi(\xi_1,\xi_2) = \sum_{m=-\infty}^\infty \sum_{n=-\infty}^\infty a_{mn} e^{i \frac{2\pi m}{\ell} \xi_1 + i \frac{2\pi n}{d} \xi_2 }
$$
to obtain 
\begin{eqnarray*}
\!\int_Y G \left\| \left(\nabla + i k_2 \be_2 + i \tau \be_1 \right)\phi   \right\|^2 \dY \,\geqslant\, G_{\inf} \sum_{m=-\infty}^\infty \sum_{n=-\infty}^\infty |a_{mn}|^2   \left( \left(\frac{2\pi m}{\ell} + \tau\right)^2+  \left(\frac{2\pi m}{\ell} + k_2\right)^2 \right).
\end{eqnarray*}
Letting $\tau = \frac{\pi }{\ell}$, it follows that $|\frac{2\pi m}{\ell} + \tau|\geqslant \frac{\pi }{\ell}$ for all $m=0,\pm 1, \cdots$. From the above equation, we obtain 
\begin{eqnarray*}
\!\int_Y G \left\| \left(\nabla + i k_2 \be_2 + i \tau \be_1 \right)\phi   \right\|^2 \dY   \,\geqslant\, G_{\inf} \big(\frac{\pi }{\ell}\big)^2\sum_{m=-\infty}^\infty \sum_{n=-\infty}^\infty |a_{mn}|^2  = G_{\inf} \big(\frac{\pi }{\ell}\big)^2 \|\phi\|^2_{L^2_p(Y)}. 
\end{eqnarray*}  
On combining the last result with 
$$
\omega^2 \!\int_Y \rho |\phi|^2 \dY \,\leqslant\,  \omega^2 \rho_{\sup} \|\phi\|^2_{L^2_p(Y)}
$$
and letting $\tau = \tfrac{\pi }{\ell}$, we find that 
$$
\!\int_Y G \left\| \left(\nabla + i k_2 \be_2 + i \tau \be_1 \right)\phi   \right\|^2 \dY \,>\, \omega^2 \!\int_Y \rho |\phi|^2 \dY \quad\mbox{ if } \quad \omega < \tau \sqrt{G_{\inf}/\rho_{\sup}}.
$$
The proof is completed by observing that $\Re \langle \mathcal{T}(\tau) \phi,\phi \rangle = \!\int_Y G \left\| \left(\nabla + i k_2 \be_2 + i \tau \be_1 \right)\phi   \right\|^2 \dY -  \omega^2 \!\int_Y \rho |\phi|^2 \dY$,
which establishes the injectivity of $\mathcal{T}(\tau)$. 
\end{proof}

\subsubsection{Essential strip of the $\kappa_n$-spectrum}\label{essential_strip}

Let~$\kappa_n\!\in\mathbb{C}$ and $\phi_n\!\in H_p^1(Y)$ solve~\eqref{QEP} for given~$(\omega,k_2)$. Then the resulting Bloch wave~\eqref{FBT} can be written as 
\begin{equation} \label{com1}
\phi_n(\bxi) \, e^{i (\kappa_n,k_2) \cdot \bxi} \;=\; [\phi_n(\bxi)e^{-i 2\pi\xi_1/\ell}] \, e^{i (\kappa_n+2\pi/\ell,k_2) \cdot \bxi}. 
\end{equation}
Since the term inside the brackets is clearly $Y$-periodic, the right-hand side of~\eqref{com1} is a Bloch wave propagating at $k_1=\kappa_n\!+2\pi/\ell$. As a result, the solution pairs $(\kappa_n,\phi_n)$ and $(\kappa_n\!+2\pi/\ell,\phi_ne^{-i 2\pi\xi_1/\ell})$ are \emph{equivalent}. Indeed, for given~$(\omega,k_2)$, equations~\eqref{2DFB} remain invariant under the substitution
\begin{equation} \label{com2}
(\kappa_n,\phi_n) \;\mapsto\; (\kappa_n\!+2\pi/\ell,\phi_ne^{-i 2\pi\xi_1/\ell}).
\end{equation}
As a result, the eigenvalues of~\eqref{QEP} can be conveniently folded into the strip
 \begin{equation} \label{essenstrip}
\Re(\kappa_n) \in \Big[\!-\nes\frac{\pi}{\ell},\frac{\pi}{\ell}\Big),  \qquad \Im(\kappa_n) \in \mathbb{R}, 
\end{equation}
provided that the respective eigenfunctions are reckoned according to~\eqref{com2}.  

For clarity, we note that the $\kappa_n$-spectrum contains only a \emph{finite number of real eigenvalues}, $\kappa_n\!\nes \in\mathbb{R}$. To demonstrate this, we temporarily relabel the prescribed frequency~$\omega$ and wavenumber~$k_2$ in QEVP~\eqref{QEP} as $\omega'$ and $k_2'$, respectively. Then, with reference to the standard dispersion relationship $\omega_n(k_1,k_2)$ ($n\!\in\mathbb{Z}^+, \, k_1,k_2\!\in\mathbb{R}$) -- obtained by solving the linear eigenvalue problem due to~\eqref{2DFB}, $\kappa_n\!\in\mathbb{R}$ are given by the intersection of the dispersion surfaces $\omega_n(k_1,k_2)$ with the line specified by $\omega\shh \omega'$ and $k_2\shh k_2'$.

\subsubsection{Truncated eigenspectrum} \label{trunceigsp}

In the sequel, we assume discreteness of the set~$\{\kappa_n\}$ and recall Proposition~\ref{prop1} which demonstrates the symmetry of~$\{\kappa_n\}$ about the $\Re(\kappa_n)$-axis in the complex plane. In this vein, we shall conveniently consider a \emph{truncated spectrum}, $\{\kappa_n,\phi_n\}_{n=1}^{2N}$, of the featured QEVP such that:
\begin{itemize}

\item Subset $\{\kappa_n,\phi_n\}_{n=1}^{N}$ contains the eigenvalues with $N$~\emph{largest negative} imaginary parts, arranged in the order of ascending~$|\Im(\kappa_n)|$.  Concerning the eigenvalues $\kappa_n\!\in\mathbb{R}$, in this subset we include only those with $\kappa_n<0$, similarly arranged in the order of ascending~$|\kappa_n|$;
    
\item Subset $\{\kappa_n,\phi_n\}_{n=N+1}^{2N}$ contains the eigenvalues with $N$~\emph{smallest positive} imaginary parts, arranged in the order of ascending~$\Im(\kappa_n)$. Concerning the eigenvalues~$\kappa_n\!\in\mathbb{R}$, in this subset we include only those with $\kappa_n>0$, also arranged in the ascending order. 

\end{itemize}

To distinguish between the first and the second half of the set, we conveniently write $\kappa_n:=\leftw{\kappa}_{\!n}$ ($n=\overline{1,N}$) which refers to either propagation or exponential decay of $e^{i \kappa_n\es\xi_1}$ in the \emph{negative} $\xi_1$-direction (the so-called \emph{left-going} waves), and $\kappa_n:=\rightw{\kappa}_{\!n}$ ($n=\overline{N\!+\!1,2N}$) which refers to either propagation or exponential decay of $e^{i\kappa_n\es\xi_1}$ in the \emph{positive} $\xi_1$-direction (the so-called \emph{right-going} waves); see Fig.~\ref{trunc} for illustration. 

\begin{figure}[ht]
\centering{\includegraphics [width=0.62\textwidth ]{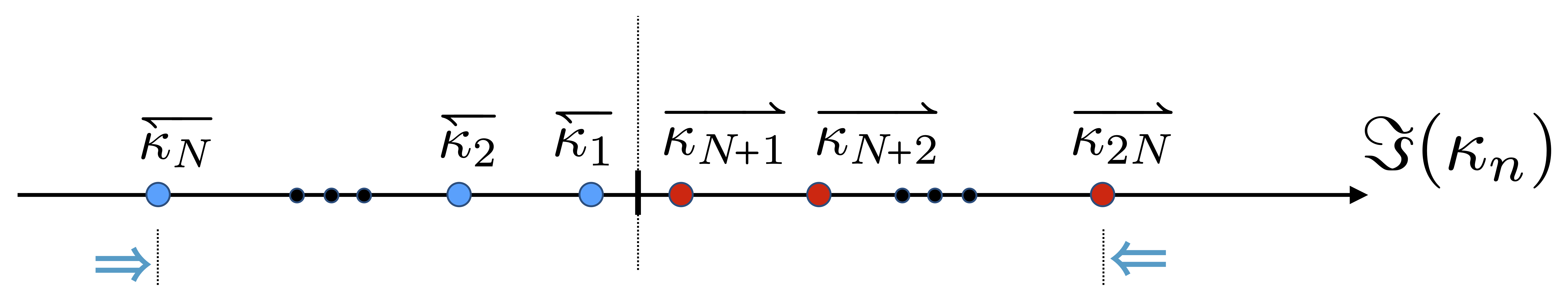}}
\caption{Truncated eigenspectrum of QEVP~\eqref{QEP} assuming $|\Im(\kappa_{n})|\!>\!0$, $n\shh \overline{1,2N}$.} \label{trunc}
\end{figure}

\begin{remark}
The foregoing truncation of the eigenspectrum of~\eqref{QEP} amounts to discarding the Bloch-wave modes $\phi_n(\bxi)e^{i (\kappa_n,k_2) \cdot \bxi}$ with a strong exponential decay in either $-\xi_1$ or $+\xi_1$ direction. Physically, these modes represent (either pure or leaky) evanescent waves~\citep[e.g.][]{laude2009evanA} propagating in the~$\xi_2$-direction; in the case of scattering by a layer-cake system that is periodic in the~$\xi_2$-direction (the focus of this study), such modes signify interfacial waves that are ``trapped''  by the system. In the context of the scattering problem examined in Sec.~\ref{scatbvp}, a justification of the featured truncation paradigm is provided in Sec.~\ref{strongeva}.
\end{remark}

\begin{remark}
For unit cells that are symmetric with respect to the $\xi_1\shh \ell/2$ axis, the symmetry argument guarantees that when~$\kappa_n\!\in\mathbb{R}$ is an eigenvalue of QEVP~\eqref{QEP}, so is~$-\kappa_n$. In such situations, one finds that (i) $\kappa_{m+N}=-\kappa_m$ for $\Im(\kappa_m)\shh 0$, and (ii) $\Im(\kappa_{m+N})=-\Im(\kappa_m)$ for $\Im(\kappa_m)\!>\!0$, $m=\overline{1,N}$. The foregoing arrangement, however, does not rely on such parity. When selecting~$N$, it is only important that both $\exp(\Im(\rightw\kappa_{\!\!N})\ell)$ and $\exp(-\Im(\leftw{\kappa}_{\!\!2N})\ell)$ fall below a selected threshold value, see Sec.~\ref{strongeva}. 
\end{remark}

\subsection{Power flow of the propagating modes} \label{Poynt}

As suggested above, it is useful to classify the Bloch waves $\phi_n(\bxi)e^{i (\kappa_n,k_2) \cdot \bxi}$ as either left- or right-going with respect to the $\xi_1$-direction. For evanescent modes, such classification is unequivocally decided by the sign of~$\Im(\kappa_n)$. For propagating modes ($\Im(\kappa_n)\shh 0$), on the other hand, the situation is more nuanced. In particular, we opt to classify the latter according to the sign of~$\kappa_n$, which identifies the direction of  phase delay. While such choice suffices for the present study, for propagating modes it may also be useful to identify the direction of \emph{energy propagation}. For lossless media (assumed in this study), the sought direction is identified via the group velocity, $\bv^{g} = \nabla_{\!\bk}\es \omega$. To expose the direction of~$\bv^{g}$, we recall the definition of the Poynting vector \citep{Auld1973} -- averaged temporally over the period of oscillations $\text{\sc{T}}\!=2\pi/\omega$ and spatially over~$Y$ -- which can be written as  
\begin{equation} \label{PV1}
\langx\boldsymbol{\mathcal{P}}_{\!\!\text{\tiny T}}\rangx \::=\: \frac{1}{2} \Re\hh \langx\! -\nn\overline{v}\es \bsig \rangx 
\:=\:  \frac{\omega}{2} \Im\hh \langx \overline{u}\es \bsig \rangx. 
\end{equation}
Here $\bsig = G(\bxi)\nabla u$ is the stress vector; $v=-i \omega u$ is the particle velocity, and $\langx\!\cdot\!\rangx = |Y|^{-1}\langle\cdot,1\rangle$ denotes the spatial average over $Y$. It is shown by Willis~\citep{willis2016} that for a Floquet-Bloch wave, $\langx\boldsymbol{\mathcal{P}}_{\!\!\text{\tiny T}}\rangx = \langx\mathcal{U}_{\text{\tiny T}}\rangx^{-1}\bv^{g}$ where $\langx\mathcal{U}_{\text{\tiny T}}\rangx$ is the spatiotemporal average of the internal  energy density. Typically, $\langx\boldsymbol{\mathcal{P}}_{\!\!\text{\tiny T}}\rangx$ is  easier to evaluate than~$\bv^{g}$ and can be conveniently used to expose its direction. Using \eqref{FBT}, we obtain $\bsig = G [(\nabla\nes + i\bk) \phi] e^{i\bk \cdot \bxi}$ and so 
\begin{equation} \label{PV2}
\langx\boldsymbol{\mathcal{P}}_{\!\!\text{\tiny T}}\rangx  \:=\:  
\frac{\omega}{2} \Im\hh \langx G \es\overline{\phi} \es  (\nabla\nes + i\bk) \phi \rangx
\end{equation}
for a given Bloch wave. In the context of this study, Floquet-Bloch modes with $\langx\boldsymbol{\mathcal{P}}_{\!\!\text{\tiny T}}\rangx\cdot\be_1\!>\!0$ carry energy to the right and vice versa. Thus, a more meaningful ``left-right''  taxonomy of the eigenvalues $\kappa_n\!\in\mathbb{R}$ could be established by considering the sign of $\langx\boldsymbol{\mathcal{P}}_{\!\!\text{\tiny T}}\rangx\sip\be_1$ in lieu of that of $k_1\shh \kappa_n$ (which signifies the direction of \emph{phase} propagation). For its simplicity, however, we retain the latter criterion for the classification of propagating Bloch modes.

\section{Scattering problem} \label{scatbvp}

We consider the scattering of scalar waves in an otherwise homogeneous medium by a layer-cake system occupying the region
\[
\Omega \,=\, \left\{\bxi\nes:\hh 0 < \xi_1\! < L, \: \xi_2\!\in\mathbb{R} \right\}, 
\]
that consists of $J$  dissimilar strips that are $d$-periodic in the $\xi_2$-direction, see Fig.~\ref{RT}. 

The mother periodic medium, $\mathcal{M}_j$, of the $j$th strip is supported on an orthogonal lattice with the unit cell 
\begin{equation} \label{motherY}
Y_j = \{\bxi\nes:\hh x^{(j)}\nes < \xi_1\nes < x^{(j)}\!+\nes\ell^{(j)}, \: 0 < \xi_2 < d \}, \quad j=\overline{1,J}.    
\end{equation}
In general, we allow that different strips be cutouts from the same periodic medium, i.e. that $\mathcal{M}_j\shh \mathcal{M}_l$ for $j\!\!\neq\!l$ and $j,l\in\overline{1,J}$; we shall make that distinction later in Sec.~\ref{plug-and-play}. The $j$th strip has width~${w}^{(j)}$ and occupies the region 
\[
\{\bxi\nes:\hh x^{(j)}\nes<\xi_1\nes <x^{(j+1)}, \: \xi_2\nes\in\mathbb{R}\}, \quad x^{(j+1)} =\, x^{(j)} + {w}^{(j)}, \quad j=\overline{1,J} 
\]
where $x^{(1)}\shh 0$ and $x^{(J+1)}\shh L$. For generality, we let ${w}^{(j)}\!\gtreqlessslant\nes \ell^{(j)}$, i.e. we do not require the strips to be unit-cell-wide.

To help formulate the scattering problem, we denote the strip boundaries by 
\begin{equation}
 \Gamma_{\!j}=\{\bxi\nes:\hh \xi_1 = x^{(j)}\}, \quad j=\overline{1,J\!+\!1}.    
\end{equation}
Across every~$\Gamma_{\!j}$, the wavefield~$u(\bxi)$ maintains the continuity of displacements and shear stresses in that 
\begin{equation} \label{continuity}
u\es|_{\Gamma_{\!j}^-} \:=\: u\es|_{\Gamma_{\!j}^+}, \qquad \sigma\es|_{\Gamma_{\!j}^-} \:=\: \sigma\es|_{\Gamma_{\!j}^+}, \quad \sigma= (1,0)\cdot G(\bxi) \nabla u, \quad j=\overline{1,J\!+\!1}. 
\end{equation}

Letting~$G_\circ$, $\rho_\circ$ and $c_\circ=\sqrt{G_\circ/\rho_\circ}$ denote respectively the shear modulus, mass density, and phase velocity of the homogeneous medium, our focus is on situations where the incident field 
\begin{equation}\label{uinc}
u^{inc} \,=\, e^{i \bk\cdot\bxi}, \qquad \bk=(k_1,k_2)\in\mathbb{R}^2, \quad \|\bk\|=\frac{\omega}{c_\circ}, \quad k_1>0  
\end{equation}
is a plane wave impinging (at frequency~$\omega$) on the layered system from the left. In order  to satisfy the germane continuity conditions across all interfaces parallel to the $\xi_2$-axis, we know from the onset that the $k_2$-component of the incident wave vector is a \emph{conserved quantity}~\citep{lamacz2018outgoing}. Indeed, this fact motivates consideration of the QEVP~\eqref{QEP}. 

\subsection{Preliminaries}
In what follows, quantities affiliated with the left half-space, strip $j\nes\in\nes\overline{1,J}$, and the right half-space will be indicated by the superscript $0$, $j$, and~$J\!+\!1$, respectively. For instance, the eigenvalues and eigenfunctions of the quadratic eigenvalue problem~\eqref{QEP} written for the $j$th unit cell will be denoted by~$\kappa_n^j$ and~$\phi_n^j$. By analogy to the earlier classification of the wavenumbers~$\kappa_n$, all quantities corresponding to either left-propagating or left-decaying waves will be indicated by the ``$\leftw{}$'' symbol, and similarly by ``$\rightw{}$'' for the right-going waves. In this vein, we start by defining the matrix exponential functions
\begin{equation} \label{aux1}
\bE^j(\xi_1) = \begin{bmatrix} \leftw{\bE}^j(\xi_1) & \bzero \\ \bzero & \rightw{\bE}^j(\xi_1) \end{bmatrix}, \qquad 
\begin{aligned} 
\leftw{\bE}^j(\xi_1) &\,=\, \text{diag}\,( e^{i \leftw{\kappa}^j_{\!n} \xi_1}, n=\overline{1,N} ), \\
\rightw{\bE}^j(\xi_1) &\,=\, \text{diag}\, ( e^{i \rightw{\kappa}^j_{\!n} \xi_1}, n=\overline{N+1,2N}  )
\end{aligned}. 
\end{equation}
Since~\eqref{aux1} features the wave numbers with both negative and positive imaginary parts, matrix $\bE^j$ carries a mix of exponentially-growing and exponentially-decaying functions. Thus, even for moderate values of~$\xi_1$, $\bE^j$ may be burdened by a large condition number and thus ill-suited for computational treatment of the scattering problem. To help regularize the problem without additional approximations, we next introduce the constant matrices 
\begin{equation} 
\bH ^j = \begin{bmatrix} \leftw{\bH }^j & \bzero \\ \bzero & \rightw{\bH }^j \end{bmatrix}, \qquad 
\begin{aligned}
\leftw{\bH }^j &\,=\, \text{diag}\,( e^{-i \leftw{\kappa}^j_{\!n}x^{(j+1)}}, n=\overline{1,N} ) \\
\rightw{\bH }^j &\,=\, \text{diag}\,( e^{-i \rightw{\kappa}^j_{\!n}x^{(j)}}, n=\overline{N+1,2N} ) 
\end{aligned},
\end{equation}
which allows us to bring to bear the regularized exponential matrices 
\begin{equation} \label{regexp}
\Ereg^j(\xi_1) = \bE^j(\xi_1) \es \bH^j = \begin{bmatrix}
\leftw{\Ereg}^j(\xi_1) & \bzero \\
\bzero & \rightw{\Ereg}^j(\xi_1)
\end{bmatrix}, \qquad 
\begin{aligned}
 \leftw{\Ereg}^j(\xi_1) &= \leftw{\bE}^j(\xi_1) \leftw{\bH }^j \\ \rightw{\Ereg}^j(\xi_1) &= \rightw{\bE}^j(\xi_1) \rightw{\bH }^j   
\end{aligned}.
\end{equation}
Considering the behavior of~$\Ereg^j$ for  $x^{(j)}\!<\xi_1\!<x^{(j+1)}$ and letting~$\mathcal{E}^j_n$ denote its $n$th diagonal entry ($n=\overline{1,2N}$), \eqref{regexp} ensures that (i) $|\mathcal{E}^j_n|\leqslant 1$, and (ii) $|\mathcal{E}^j_n|$ does not become vanishingly small even for large~$x^{(j)}$, e.g. in the case of a large number of unit cells. In the sequel, we shall use the short-hand notation 
\begin{equation} \label{Eregjq}
\rightw{\Ereg}^j_{\!\!q} := \rightw{\Ereg}^j(x^{(q)}),  \qquad  \leftw{\Ereg}^j_{\!\!q} := \leftw{\Ereg}^j(x^{(q)}) \qquad 
\text{for}~~ q\in\{j,j\!+\!1\},
\end{equation}
noting in particular that  
\begin{equation} \label{expomat1}
\rightw{\Ereg}^{j}_{\!\!j}=\leftw{\Ereg}^{j}_{\!\!j+1}=\bI, \quad~ 
\rightw{\Ereg}^{j}_{\!\!j+1} = \text{diag}\,( e^{i \rightw{\kappa}^j_{\!n}{w}^{(j)}}, n=\overline{1,N} ), \quad~ 
\leftw{\Ereg}^{j}_{\!\!j} = \text{diag}\,( e^{-i \leftw{\kappa}^j_{\!n}{w}^{(j)}}, n=\overline{1,N} ), \quad~ j=\overline{1,J}
\end{equation}
with $\Im(\leftw{\kappa}^j_{\!n})\leqslant 0$ and $\Im(\rightw{\kappa}^j_{\!n})\geqslant 0$.

\begin{figure}
\centering{\includegraphics [width=0.9\textwidth ]{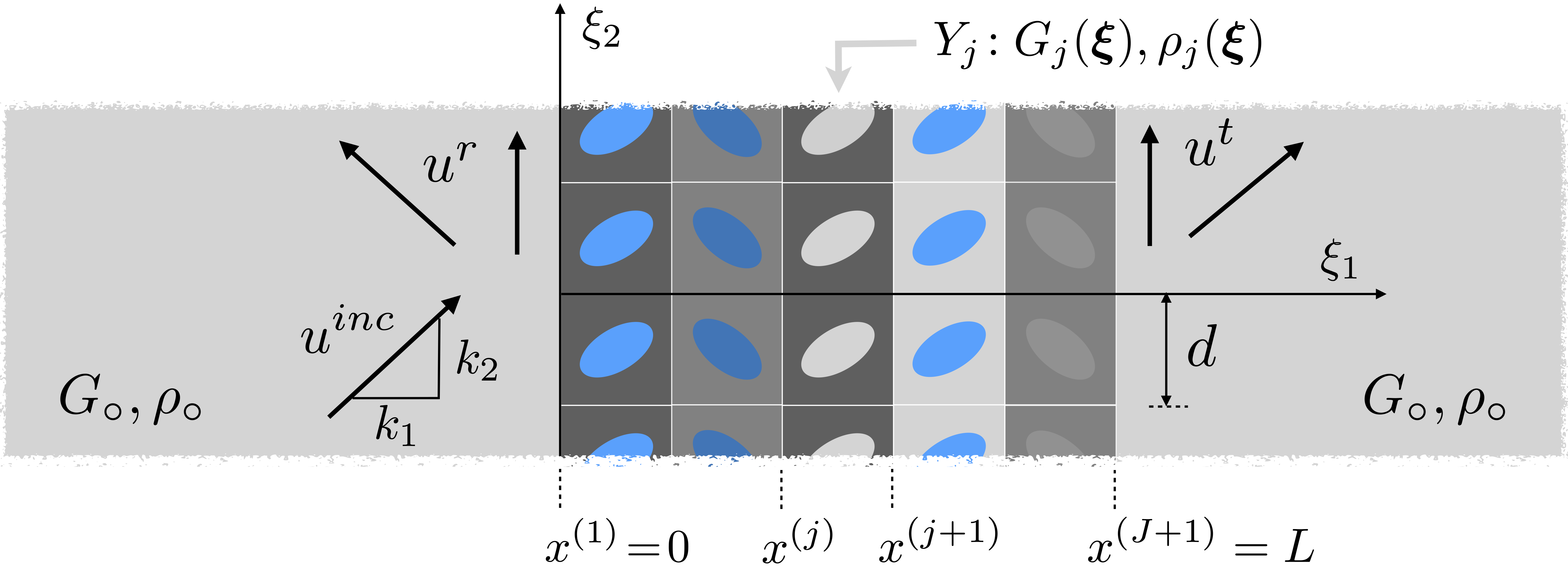}}
\caption{Scattering of scalar waves by a layer-cake system~$\Omega$ that is $d$-periodic in the $\xi_2$-direction. For clarity, each strip is depicted as being unit-cell-wide; this is not required by the analysis, see Fig.~\ref{strips} for a generic configuration.} \label{RT}
\end{figure}

\subsection{Wavefield in the layer-cake system}\label{SecLCS}

On recalling the solution of the QEVP~\eqref{QEP} and its truncated spectrum $\{\kappa_m,\phi_m\}_{m=1}^{2N}$ introduced in Sec.~\ref{trunceigsp}, the displacement field~$u^j$ and  affiliated shear stress $\sigma^j=(1,0)\cdot G_{\nes j} \nabla u^j$ on a ``vertical'' plane within the $j$th strip of the layer-cake system can be approximated as 
\begin{equation}\label{fields}
\begin{aligned}
u^j(\bxi) &= \sum_{n=1}^{2N} \alpha^j_n \phi^j_n(\bxi) e^{{i} \bk_n^j \cdot \bxi}, \quad \bk_n^j = (\kappa_n^j,k_2) \\
\sigma^j(\bxi) &=  \sum_{n=1}^{2N} \alpha^j_n \psi^j_n(\bxi)e^{{i} \bk_n^j \cdot \bxi},  \quad \psi^j_n(\bxi) = G_{\nes j}\nes(\bxi) \left(\frac{\partial}{\partial \xi_1} + i \kappa_n^j\right) \phi^j_n(\bxi). 
\end{aligned}
\end{equation}
In physical terms, approximation~\eqref{fields} discards the Floquet-Bloch wave modes with a rapid exponential decay in either~$\xi_1$ or~$-\xi_1$ direction. 

\begin{remark}
To cater for strips that are wider than the unit cell $Y_j$ of the mother periodic medium $\mathcal{M}_j$ (i.e. ${w}^{(j)}\nes>\ell^{(j)}\nes$), $\phi^j_n, \, \psi^j_n$ and $G_{\nes j}$ in~\eqref{fields} are hereon interpreted as the $Y_j$-periodic maps $\mathbb{R^2}\nes\mapsto\nes\mathbb{C}$.
\end{remark}

By virtue of~\eqref{2DFB}, the periodicity of~$\phi^j_n(\bxi)$ and~$\psi^j_n(\bxi)$ in the $\xi_2$-direction is~$d$, while that of the factor~$e^{{i} \bk_n^j \cdot \bxi}$ is $2\pi/k_2$. As a result, $u^j$ will exhibit periodicity in the $\xi_2$-direction only in situations where $k_2 d/(2\pi)$ is a rational number. To handle the problem, we conveniently proceed by solving for the factored \citep{Kulpe2014,Sriva2020} displacement and (shear) stress fields
\begin{equation}\label{factusig}
\tilde{u}^j(\bxi) :=\, u^j(\bxi) \, e^{-i k_2 \xi_2}, \qquad~~
\tilde{\sigma}^j(\bxi) :=\, \sigma^j(\bxi) \, e^{-i k_2 \xi_2}
\end{equation}
that are $d$-periodic in the $\xi_2$-direction. As a result, one can expand the quantities in~\eqref{factusig} in Fourier series, whose truncated version reads  
\begin{equation} \label{fields2}
\begin{aligned}
\tilde{u}^j(\bxi) &= 
\sum_{m=-M}^{M} \left\{ \sum_{n=1}^{2N} \alpha_n^j \hat{\Phi}_{mn}^j(\xi_1) e^{{i} \kappa_n^j \xi_1} \right\} e^{{i} \frac{2 \pi}{d} m \xi_2 }, \qquad 
\hat{\Phi}_{mn}^j(\xi_1) = \frac{1}{d} \int_{0}^{d} \phi^j_n(\bxi) e^{-{i} \frac{2 \pi}{d} m \xi_2} d\xi_2, \\ 
\tilde{\sigma}^j(\bxi) &= 
\sum_{m=-M}^{M} \left\{ \sum_{n=1}^{2N} \alpha_n^j \hat{\Psi}^j_{mn}(\xi_1) e^{{i} \kappa_n^j \xi_1} \right\} e^{{i} \frac{2 \pi}{d} m \xi_2 }, \qquad
\hat{\Psi}^j_{mn}(\xi_1) = \frac{1}{d} \int_{0}^{d} \psi^j_n(\bxi) e^{-{i} \frac{2 \pi}{d} m \xi_2} d\xi_2.
\end{aligned}
\end{equation}
Letting    
\begin{equation} \label{fields2z}
\begin{aligned}
\hat{\tilde{\bu}}^j(\xi_1) := 
[\hat{\tilde{u}}^j_m, m=\overline{-M,M}]\tran, \qquad \hat{\tilde{u}}^j_{m}(\xi_1) = \sum_{n=1}^{2N} \alpha_n^j \hat{\Phi}_{mn}^j(\xi_1) e^{{i} \kappa_n^j \xi_1} \\
\hat{\tilde{\bsig}}^j(\xi_1) := 
[\hat{\tilde{\sigma}}^j_{m}, m=\overline{-M,M}]\tran, \qquad \hat{\tilde{\sigma}}^j_{m}(\xi_1) = \sum_{n=1}^{2N} \alpha_n^j \hat{\Psi}_{mn}^j(\xi_1) e^{{i} \kappa_n^j \xi_1}
\end{aligned}
\end{equation}
collect the respective Fourier components of $\tilde{u}^j(\bxi)$ and $\tilde{\sigma}^j(\bxi)$, using matrix-vector notation we obtain 
\begin{equation} \label{eqRT}
\begin{bmatrix} \hat{\tilde{\bu}}^j \\ \hat{\tilde{\bsig}}^j \end{bmatrix} (\xi_1) = 
\bLam ^j(\xi_1) \; \bE^j(\xi_1) \, \boldsymbol{\alpha}^j, \qquad j=\overline{1,J}
\end{equation}
where~$\boldsymbol{\alpha}^j$ collects~$\alpha_n^j$; $\,\bE^j$ is given by~\eqref{aux1}, and 
\begin{equation}\label{LamLam}
\bLam^j(\xi_1) \:=\: \begin{bmatrix}
\hat{\Phi}_{(-M)n}^j, & n=\overline{1,2N}  \\ 
\vdots & \vdots\\
\hat{\Phi}_{Mn}^j, &n=\overline{1,2N}  \\
\hat{\Psi}_{(-M)n}^j, &n=\overline{1,2N} \\
\vdots & \vdots \\
\hat{\Psi}_{Mn}^j, &n=\overline{1,2N}
\end{bmatrix} \: := \: 
\begin{bmatrix}
\leftw{\bLam}^j(\xi_1) & 
\rightw{\bLam}^j(\xi_1)
\end{bmatrix} 
\end{equation}
which differentiates between the coefficients affiliated with the ``left-going'' waves (columns~1 through~$N$) and those affiliated with the ``right-going'' waves (columns~$N\!+\!1$ through~$2N$). In the spirit of~\eqref{Eregjq}, we reserve the short-hand notation 
\begin{equation} \label{Lamjq}
\rightw{\bLam}^j_{q} \,:=\, \rightw{\bLam}^j(x^{(q)}),  \qquad  \leftw{\bLam}^j_{q} \,:=\, \leftw{\bLam}^j(x^{(q)}) \qquad
\text{for}~~ q\in\{j,j\!+\!1\}
\end{equation}
for future use. 

\begin{remark} \label{MN-relat}
We observe that the dimension of~$\bLam^j$ is $(4M\! +\nes 2)\times(2N)$, as controlled by the number of Fourier modes $(2M\!+\nes1)$ and the number of Bloch eigenstates~$(2N)$ involved. As a rule of thumb, one seeks to have $4M\!+\nes 2\geqslant 2N$ which yields either an ``even-determined'' or ``overdetermined'' system where the number of continuity conditions (written in the Fourier space) is equal or larger than the number of unknowns. At this point, however, little is known about the rank of~$\bLam^j$, even though numerical simulations suggest that it equals~$2N$.   
\end{remark}

On recalling~\eqref{regexp}, we next rewrite~\eqref{eqRT} in a regularized form 
\begin{equation} \label{fourcoefint}
\begin{aligned}
\begin{bmatrix} \hat{\tilde{\bu}}^j \\ \hat{\tilde{\bsig}}^j \end{bmatrix} (\xi_1) &= 
\bLam ^j(\xi_1) \: \Ereg^j(\xi_1) \: \bbeta  ^j \\*[-2mm]
&= [\leftw{\bLam}^j(\xi_1) \quad \rightw{\bLam}^j(\xi_1)] \; \Ereg^j(\xi_1) 
\begin{bmatrix} \leftw{\bbeta}^{\nes j} \\ \rightw{\bbeta}^{\nes j} \end{bmatrix}, \qquad j=\overline{1,J}
\end{aligned}
\end{equation}
that features a new set of constants
\begin{equation} \label{betas}
\bbeta^j = \begin{bmatrix} (\leftw{\bH}^j)^{-1} & \bzero \\ \bzero & (\rightw{\bH}^j)^{-1} \end{bmatrix} 
\boldsymbol{\alpha}^j \: := \: \begin{bmatrix} \leftw{\bbeta}^{\nes j} \\ \rightw{\bbeta}^{\nes j} \end{bmatrix},
\end{equation}
again parsed by the direction of propagation or decay. 

\subsubsection{Translations of the mother periodic media} \label{sectran}

With reference to~\eqref{motherY}, we next consider a \emph{translation} of the mother periodic medium~$\mathcal{M}_j$ by vector $\bs_{\nn j} \in[0,\ell^{(j)})\times[0,d)$ relative to~$Y_j$, see Fig.~\ref{translation}. Such alteration, which inherently affects the impedance mismatches -- and thus energy and phase transfer -- between the laminae, is motivated by the possibility of exploring the design space of a layer-cake system for the purposes of e.g. rainbow trapping or energy harvesting. On denoting the resulting field quantities by the prime symbol, we obtain 
\begin{equation}
G_{\nes j}'(\bxi) \,=\, G_{\nes j}(\bxi-\bs_{\nn j}), \qquad \rho_{\nes j}'(\bxi) \,=\, \rho_{\nes j}(\bxi-\bs_{\nn j}).
\end{equation}
In this setting, \eqref{fields} becomes 
\begin{equation}\label{fields2bis}
u^{j\prime}(\bxi) = \sum_{n=1}^{2N} \alpha^j_n \phi^j_n(\bxi-\bs_{\nn j}) e^{{i} \bk_n^j \cdot \bxi}, \qquad 
\sigma^{j\prime}(\bxi) =  \sum_{n=1}^{2N} \alpha^j_n \psi^j_n(\bxi-\bs_{\nn j})e^{{i} \bk_n^j \cdot \bxi},  \qquad 
\quad \bk_n^j = (\kappa_n^j,k_2)
\end{equation}
where $\kappa_n^j$, $\phi^j_n(\bxi)$ and $\psi^j_n(\bxi)$ are the eigenvalue and eigenfunctions featured in~\eqref{fields}. As result, the  exponential (eigenvalue) matrices~\eqref{regexp} remain unchanged, i.e. 
\[
\rightw{\Ereg}^{j\prime}(\xi_1)= \rightw{\Ereg}^j(\xi_1), \qquad \leftw{\Ereg}^{j\prime}(\xi_1)= \leftw{\Ereg}^j(\xi_1), 
\]
while the eigenfunction matrices $\leftw{\bLam}^{j\prime}(\xi_1)$ and $\rightw{\bLam}^{j\prime}(\xi_1)$ must be recomputed via~\eqref{fields2}--\eqref{LamLam} by deploying the shifted eigenfunctions $\phi^j_n(\bxi-\bs_{\nn j})$ and $\psi^j_n(\bxi-\bs_{\nn j})$ in lieu of $\phi^j_n(\bxi)$ and $\psi^j_n(\bxi)$, respectively. Any such reevaluation, however, entails only a minimal computational cost for it voids the need for solving anew QEVP~\eqref{QEP} for each~$\mathcal{M}_j$. 

In this way, by letting $\bs_{\nn j}\shh \mu\hh\be_2$ for some $\mu\nes\in\nes (0,d)$ (``vertical'' translation) one may for instance generate strips of periodic media supported on \emph{non-orthogonal} Bravais lattices, see the last three strips in Fig.~\ref{strips}(c) for illustration. In such situations, $\leftw{\bLam}^{j\prime}(\xi_1)$ and $\rightw{\bLam}^{j\prime}(\xi_1)$ can be quickly recomputed via the shift property of the Fourier series in that $\hat{\Phi}_{mn}^{j\prime}(\xi_1)=\hat{\Phi}_{mn}^j(\xi_1) \, e^{-{i} \frac{2 \pi}{d} m \mu}$ and similarly for $\hat{\Psi}_{mn}^{j\prime}(\xi_1)$.

\begin{figure}
\centering{\includegraphics [width=0.84\textwidth ]{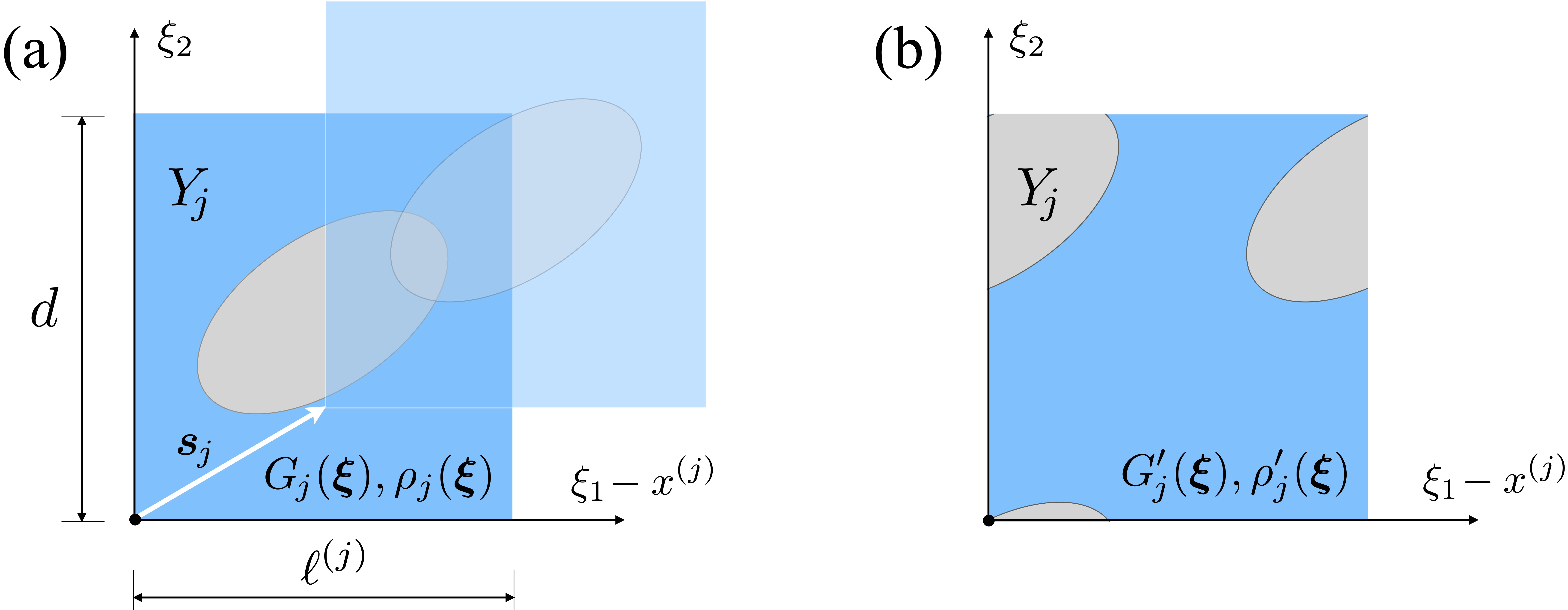}} 
\caption{(a) Translation by vector~$\bs_{\nn j}$ of the mother periodic medium~$\mathcal{M}_j$ relative to a fixed window~\eqref{motherY} of the unit cell~$Y_j$, and (b) the resulting unit cell morphology of the $j$th strip.}\label{translation}
\end{figure}

\subsection{Wavefield in the homogeneous medium}

On denoting respectively by~$u^r$ and~$u^t$ the wavefield in the homogeneous medium reflected and transmitted by the heterogeneous layer, we proceed by computing the factored displacement and stress fields in the left ($-$) and the right ($+$) homogeneous half-space given by
\begin{equation} \notag 
\begin{aligned}
\tilde{u}^{-} &= (u^{inc}+u^r)\, e^{-i k_2 \xi_2}, \qquad &\tilde{\sigma}^{-} &= (\sigma^{inc}+\sigma^r)\, e^{-i k_2 \xi_2}  \\
\tilde{u}^{+} &= u^t\, e^{-i k_2 \xi_2}, \qquad &\tilde{\sigma}^{+} &= \sigma^t\, e^{-i k_2 \xi_2}.
\end{aligned}
\end{equation}
Since~$\tilde{u}^r=u^r(\bxi)e^{-i k_2 \xi_2}$, $\tilde{u}^t=u^t(\bxi)e^{-i k_2 \xi_2}$ and their stress counterparts are $d$-periodic in the $\xi_2 $-direction, one can again make use of the (truncated) Fourier series expansion in that $\tilde{u}^r=\sum_{m=-M}^{M}\hat{\tilde{u}}^r_m(\xi_1) e^{{i} \frac{2 \pi}{d} m \xi_2}$ and similarly for the other fields. Since both half-spaces are homogeneous, each Fourier component of these factored fields is a \emph{plane wave} propagating in the~$\xi_1$-direction, e.g. $\,\hat{\tilde{u}}^r_m(\xi_1) = r_m e^{{i} k_{1m}^- \xi_1}$ for some constant~$r_m$ and wavenumber~$k_{1m}^-$. Letting  
\begin{equation} \label{imped1}
 S_\circ \,=\, i G_\circ k_1, \qquad S_m^{\pm} \,=\, i G_\circ k_{1m}^{\pm},
\end{equation}
one can write 
\begin{equation} \label{usigmp1} 
\begin{aligned}
\tilde{u}^-(\bxi) &= e^{{i} k_1 \xi_1} +\! \sum_{m=-M}^{M} (r_m e^{{i} k_{1m}^- \xi_1}) e^{{i} \frac{2 \pi}{d} m \xi_2}, \quad
&\tilde{\sigma}^{-}(\bxi) &= S_\circ e^{{i} k_1 \xi_1} +\! \sum_{m=-M}^{M} (S_m^-\hh r_m e^{{i} k^-_{1m} \xi_1}) e^{{i} \frac{2 \pi}{d} m \xi_2},    \\
\tilde{u}^+(\bzeta) &= \sum_{m=-M}^{M} (t_m e^{{i} k_{1m}^+ \zeta_1}) e^{{i} \frac{2 \pi}{d} m \zeta_2}
&\tilde{\sigma}^{+}(\bzeta) &= \sum_{m=-M}^{M} (S_m^+\hh t_m e^{{i} k_{1m}^+ \zeta_1}) e^{{i} \frac{2 \pi}{d} m \zeta_2},  
\end{aligned}
\end{equation}
where the transmitted wavefield is conveniently described in terms of the shifted coordinate \mbox{$\bzeta = \bxi\!-\!(L,0)$} whose origin resides on~$\Gamma_{\!J+1}$. Since $\bk^-_m\shh (k_{1m}^-,k_2+2\pi m/d)$ is the wave vector of the component of~$u^r$ due to~$\hat{\tilde{u}}^r_m$, from the requirement that $\|\bk^-_m\|=\omega/c_\circ$ (see also~\eqref{uinc}) we obtain 
\begin{equation} \label{homogks}
k_{1m}^\pm \:=\: \pm \sqrt{\left(\frac{\omega}{c_\circ}\right)^2 - \left( k_2 + \frac{2 \pi}{d} m \right)^2} \quad \Rightarrow \quad k^\pm_ {10} = \pm k_1, \quad  S_m^+ = - S_m^-
\end{equation}
making use of the analogous argument in terms of $\bk^+_m$ and the fact that $k_{1m}^-\leqslant 0$ (reflected wave) whereas $k_{1m}^+\geqslant 0$ (transmitted  wave). On the basis of~\eqref{usigmp1}, we can formally write 
\begin{equation} \label{usigmp2} 
\begin{aligned}
\tilde{u}^-(\bxi) &:= \sum_{m=-M}^{M} \hat{\tilde{u}}^{0}_m(\xi_1) \es e^{{i} \frac{2 \pi}{d} m \xi_2}, \quad
&\tilde{\sigma}^{-}(\bxi) &:= \sum_{m=-M}^{M} \hat{\tilde{\sigma}}^{0}_m(\xi_1) \es e^{{i} \frac{2 \pi}{d} m \xi_2},  \\
\tilde{u}^+(\bzeta) &:= \sum_{m=-M}^{M} \hat{\tilde{u}}^{J+1}_m(\zeta_1) \es e^{{i} \frac{2 \pi}{d} m \zeta_2},
&\tilde{\sigma}^{+}(\bzeta) &:= \sum_{m=-M}^{M} \hat{\tilde{\sigma}}^{J+1}_m(\zeta_1) \es e^{{i} \frac{2 \pi}{d} m \zeta_2},
\end{aligned}
\end{equation}
and 
\begin{equation} \label{fields2x}
\begin{aligned}
\hat{\tilde{\bu}}^0(\xi_1) &= [\hat{\tilde{u}}^{0}_m, m=\overline{-M,M}]\tran, \qquad 
&\hat{\tilde{\bsig}}^0(\xi_1) &= [\hat{\tilde{\sigma}}^{0}_{m}, m=\overline{-M,M}]\tran, \\*[1mm]
\hat{\tilde{\bu}}^{J+1}(\zeta_1) &= [\hat{\tilde{u}}^{J+1}_m, m=\overline{-M,M}]\tran, \qquad 
&\hat{\tilde{\bsig}}^{J+1}(\zeta_1) &= [\hat{\tilde{\sigma}}^{J+1}_{m}, m=\overline{-M,M}]\tran. 
\end{aligned}
\end{equation}
By virtue of~\eqref{usigmp1}--\eqref{usigmp2}, the vectors of Fourier coefficients in~\eqref{fields2x} can be compactly written as 
\begin{equation} \label{fourcoefext}
\begin{bmatrix} \hat{\tilde{\bu}}^0 \\ \hat{\tilde{\bsig}}^0 \end{bmatrix}(\xi_1) = 
 [\leftw{\bLam}^{\!\nes\circ} \quad \rightw{\bLam}^{\!\nes\circ} ] \, \Ereg^\circ(\xi_1) 
 \begin{bmatrix} \leftw{\br} \\ \rightw{\bi} \end{bmatrix}, \qquad~~
\begin{bmatrix} \hat{\tilde{\bu}}^{J+1} \\ \hat{\tilde{\bsig}}^{J+1} \end{bmatrix}(\zeta_1) = 
 [\leftw{\bLam}^{\!\nes\circ}  \quad \rightw{\bLam}^{\!\nes\circ} ] \, \Ereg^{\circ}(\zeta_1) \begin{bmatrix} \leftw{\bzero} \\ \rightw{\bt} \end{bmatrix},
\end{equation}
where 
\begin{equation} \label{aux0}
\Ereg^\circ(\xi_1) \:=\: 
\begin{bmatrix} \leftw{\Ereg}^{\!\circ}(\xi_1) & \bzero \\ \bzero & \rightw{\Ereg}^{\!\circ}(\xi_1) \end{bmatrix}, \qquad 
\begin{aligned}
\leftw{\Ereg}^{\!\circ}(\xi_1) &\,=\, \text{diag}\,( e^{i k_{1m}^-\xi_1}, m=\overline{-M,M}), \\
\rightw{\Ereg}^{\!\circ}(\xi_1) &\,=\, \leftw{\Ereg}^{\!\circ}(-\xi_1);
\end{aligned}
\end{equation}
vector~$\leftw{\br}$ (resp.~$\rightw{\bt}$) collects~$r_m$ (resp.~$t_m$) featured by~\eqref{usigmp1} for $m=\overline{-M,M}$; $\leftw{\bi}$ is the vector with entries $i_m = \delta_{m(M+1)}$ (with~$\delta$ being the Kronecker delta); $\leftw{\bzero}$ is a zero vector of length~$2M\!+\!1$, and 
\begin{equation}
[\leftw{\bLam}^{\!\nes\circ}  \quad \rightw{\bLam}^{\!\nes\circ}] = \begin{bmatrix} \bI & \bI \\ \boldsymbol{S} & -\boldsymbol{S} \end{bmatrix}, \qquad  
\boldsymbol{S} = \text{diag}\, \big( S^-_m, m=\overline{-M,M} \big)
\end{equation}
with $\bI$ being the identity matrix of rank $2M\!+1$. We note for future reference that~$\leftw{\Ereg}^{\!\circ}(0) = \rightw{\Ereg}^{\!\circ}(0)= \bI$.

\section{Factorized Bloch solution} \label{FBS}

We are now in position to impose the continuity conditions $e^{-i k_2\xi_2}\times\eqref{continuity}$ in terms of the Fourier coefficients~\eqref{fourcoefint} and~\eqref{fourcoefext}, synthesizing the wave motion in the heterogeneous layer and the homogeneous exterior, respectively. For clarity, we summarize the featured expressions as  
\begin{align}
\begin{aligned}
\begin{bmatrix} \hat{\tilde{\bu}}^j \\ \hat{\tilde{\bsig}}^j \end{bmatrix} (\xi_1) &= 
\big[\leftw{\bLam}^j(\xi_1) \quad \rightw{\bLam}^j(\xi_1)\big] \; 
\begin{bmatrix} \leftw{\Ereg}^j(\xi_1) & \bzero \\ \bzero & \rightw{\Ereg}^j(\xi_1) \end{bmatrix}
\begin{bmatrix} \leftw{\bbeta}^{\nes j} \\ \rightw{\bbeta}^{\nes j} \end{bmatrix}, \quad j=\overline{1,J} \\
\begin{bmatrix} \hat{\tilde{\bu}}^0 \\ \hat{\tilde{\bsig}}^0 \end{bmatrix}(\xi_1) &=
\big[\leftw{\bLam}^{\!\nes\circ} \quad \rightw{\bLam}^{\!\nes\circ} \big] \, 
\begin{bmatrix} \leftw{\Ereg}^{\!\circ}(\xi_1) & \bzero \\ \bzero & \rightw{\Ereg}^{\!\circ}(\xi_1) \end{bmatrix}
\begin{bmatrix} \leftw{\br} \\ \rightw{\bi} \end{bmatrix}, \\
\begin{bmatrix} \hat{\tilde{\bu}}^{J+1} \\ \hat{\tilde{\bsig}}^{J+1} \end{bmatrix}(\zeta_1) &= 
 \big[\leftw{\bLam}^{\!\nes\circ}  \quad \rightw{\bLam}^{\!\nes\circ} \big] \, 
 \begin{bmatrix} \leftw{\Ereg}^{\!\circ}(\zeta_1) & \bzero \\ \bzero & \rightw{\Ereg}^{\!\circ}(\zeta_1) \end{bmatrix}
 \begin{bmatrix} \leftw{\bzero} \\ \rightw{\bt} \end{bmatrix}.
\end{aligned} 
\end{align}
On expanding $e^{-i k_2\xi_2}\times\eqref{continuity}$ across~$\Gamma_1$ ($\xi_1\shh 0$) and~$\Gamma_{\!J+1}$ ($\xi_1\shh L$ i.e.~$\zeta_1\shh 0$) in Fourier series and deploying the orthonormality of~$e^{{i} \frac{2 \pi}{d} m \xi_2}$, we obtain 
\begin{equation} \label{sol1}
\begin{aligned}
\big[\leftw{\bLam}^{\!\nes\circ} \quad \rightw{\bLam}^{\!\nes\circ} \big] \, 
\begin{bmatrix} \leftw{\br} \\ \rightw{\bi} \end{bmatrix} &= 
\big[\leftw{\bLam}^{\!1}_{\!1} \quad \rightw{\bLam}^{\!1}_{\!1}\big] \; 
\begin{bmatrix} \leftw{\Ereg}^{\!1}_{\!1} & \bzero \\ \bzero & {\bmII} \end{bmatrix}
\begin{bmatrix} \leftw{\bbeta}^{\!1} \\ \rightw{\bbeta}^{\!1} \end{bmatrix} &\implies \quad 
\begin{bmatrix} \leftw{\br} \\ \rightw{\bbeta}^{\!1}\end{bmatrix} &= 
\begin{bmatrix} \leftw{\bT}_{\!1} & \rightw{\bR}_{1} \\ \leftw{\bR}_{1} & \rightw{\bT}_{\!1} \end{bmatrix} 
\begin{bmatrix} \leftw{\bbeta}^{\!1} \\ \rightw{\bi} \end{bmatrix} \\
 \big[\leftw{\bLam}^{\!\nes\circ}  \quad \rightw{\bLam}^{\!\nes\circ} \big] \, 
 \begin{bmatrix} \leftw{\bzero} \\ \rightw{\bt} \end{bmatrix} &= 
 \big[\leftw{\bLam}^{\!J}_{\!J+1} \quad \rightw{\bLam}^{\!J}_{\!J+1}\big] \; 
\begin{bmatrix} {\bmII}  & \bzero \\ \bzero & \rightw{\Ereg}^{\!J}_{\!\!J+1} \end{bmatrix}
\begin{bmatrix} \leftw{\bbeta}^{\!J} \\ \rightw{\bbeta}^{\!J} \end{bmatrix} &\implies \quad 
\begin{bmatrix} \leftw{\bbeta}^{\!J} \\ \rightw{\bt} \end{bmatrix} &= 
\begin{bmatrix} \leftw{\bT}_{\!\!J+1} & \rightw{\bR}_{J+1} \\ \leftw{\bR}_{J+1} & \rightw{\bT}_{\!\!J+1} \end{bmatrix} 
\begin{bmatrix} \leftw{\bzero} \\ \rightw{\bbeta}^{\!J} \end{bmatrix}
\end{aligned} \; ,
\end{equation}
where~$\bmII$ is the identity matrix of rank~$N$; $\bzero$ is a trivial rectangular matrix of suitable dimensions, and  
\begin{equation} \label{solx1}
\begin{aligned}
\begin{bmatrix} \leftw{\bT}_{\!1}  & \rightw{\bR}_{1} \\ \leftw{\bR}_{1}  &   \rightw{\bT}_{\!1} \end{bmatrix} &= 
\big[-\!\leftw{\bLam}^{\!\nes\circ} \quad \rightw{\bLam}^{\!1}_{\!1}\big]^{\dagger} \: \big[-\!\leftw{\bLam}^{\!1}_{\!1} \quad \rightw{\bLam}^{\!\nes\circ}\big] 
\begin{bmatrix} \leftw{\Ereg}^{\!1}_{\!\!1} & \bzero \\ \bzero & \bI \end{bmatrix}, \\
\begin{bmatrix} \leftw{\bT}_{\!\!J+1} & \rightw{\bR}_{J+1} \\ \leftw{\bR}_{J+1} & \rightw{\bT}_{\!\!J+1} \end{bmatrix} &= 
\big[-\!\leftw{\bLam}^{\!J}_{\!J+1} \quad \rightw{\bLam}^{\!\nes\circ}\big]^{\dagger} \: \big[-\!\leftw{\bLam}^{\!\nes\circ} \quad \rightw{\bLam}^{\!J}_{\!J+1}\big] 
\begin{bmatrix} \bI & \bzero \\ \bzero & \rightw{\Ereg}^{\!J}_{\!\!J+1} \end{bmatrix}, 
\end{aligned}
\end{equation}
where $\boldsymbol{B}^\dagger$ denotes the generalized (Moore-Penrose) inverse of~$\boldsymbol{B}$. Similarly, imposition of the continuity conditions across internal interfaces $\Gamma_{\!j}$ ($j=\overline{2,J}$) yields
\begin{multline} \label{sol2}
\big[\leftw{\bLam}^{\!j-1}_{\!j} \quad \rightw{\bLam}^{\!j-1}_{\!j}\big] 
\begin{bmatrix} {\bmII} & \bzero \\ \bzero & \rightw{\Ereg}^{\!j-1}_{\!\!j} \end{bmatrix} 
\begin{bmatrix} \leftw{\bbeta}^{\nes j-1} \\ \rightw{\bbeta}^{\nes j-1} \end{bmatrix} = 
\big[\leftw{\bLam}^{\!j}_{\!j} \quad \rightw{\bLam}^{\!j}_{\!j}\big] 
\begin{bmatrix} \leftw{\Ereg}^{\!j}_{\!\!j}  & \bzero \\ \bzero & {\bmII} \end{bmatrix} 
\begin{bmatrix} \leftw{\bbeta}^{\nes j} \\ \rightw{\bbeta}^{\nes j} \end{bmatrix} \\ \implies 
\begin{bmatrix} \leftw{\bbeta}^{\nes j-1} \\ \rightw{\bbeta}^{j} \end{bmatrix} = 
\begin{bmatrix} \leftw{\bT}_{\!\!j} & \rightw{\bR}_{j} \\ \leftw{\bR}_{j} & \rightw{\bT}_{\!\!j} \end{bmatrix} 
\begin{bmatrix} \leftw{\bbeta}^{\nes j} \\ \rightw{\bbeta}^{\nes j-1} \end{bmatrix},
\end{multline}
where 
\begin{equation} \label{solx3}    
\begin{bmatrix} \leftw{\bT}_{\!\!j} & \rightw{\bR}_{j} \\ \leftw{\bR}_{j} & \rightw{\bT}_{\!\!j} \end{bmatrix} = 
\big[-\!\leftw{\bLam}^{\!j-1}_{\!j} \quad \rightw{\bLam}^{\!j}_{\!j}\big]^{\dagger} \: \big[-\!\leftw{\bLam}^{\!j}_{\!j} \quad \rightw{\bLam}^{\!j-1}_{\!j}\big]
\begin{bmatrix} \leftw{\Ereg}^{\!j}_{\!\!j}  & \bzero \\ \bzero & \rightw{\Ereg}^{\!j-1}_{\!\!j} \end{bmatrix}, \qquad j=\overline{2,J}.
\end{equation}
Note that the dimension of the matrices being inverted in~\eqref{solx1} is $(4M\!+2)\times(2M\!+N\!+1)$, whereas that of the matrices  inverted in~\eqref{solx3} is $(4M\!+2)\times(2N)$. For completeness, we also list the dimensions of the featured reflection and transmission matrices:
\begin{equation} \notag
\begin{aligned}
\begin{aligned}
\leftw{\bT}_{\!1}, \: \rightw{\bT}_{\!\!J+1}: &~~ (2M\!+1)\times N  \quad  &~~ \rightw{\bR}_{1}, \: \leftw{\bR}_{J+1}: &~~ (2M\!+1)\times(2M\!+1)   \\
 \rightw{\bT}_{\!1}, \: \leftw{\bT}_{\!\!J+1}: &~~ N\times (2M\!+1)  \quad &~~ \leftw{\bR}_{1}, \: \rightw{\bR}_{J+1}: &~~ N\times N 
 \end{aligned} \\ 
 \leftw{\bT}_{\!\!j}, \: \rightw{\bT}_{\!\!j}, \: \leftw{\bR}_{j}, \: \rightw{\bR}_{j}: ~~ N\times N, \quad j=\overline{2,J}. \qquad \qquad 
\end{aligned} 
\end{equation}
\subsection{Factorized wavefield} \label{FacWav}

Combining~\eqref{sol1} and~\eqref{sol2}, we obtain the system of algebraic equations 
\begin{equation} \label{sol2bis}
\begin{aligned}
&\text{(a)}& \quad \leftw{\br} &= \leftw{\bT}_{\!1} \leftw{\bbeta}^{\!1} + \rightw{\bR}_1 \rightw{\bi}, \\
&\text{(b)}& \quad \rightw{\bbeta}^{\!1} &= \leftw{\bR}_{1} \leftw{\bbeta}^{\!1} + \rightw{\bT}_{\!1} \rightw{\bi} , \\
&\text{(c)}& \quad  \leftw{\bbeta}^{\nes j-1} &= \leftw{\bT}_{\!\!j} \leftw{\bbeta}^{j} + \rightw{\bR}_{j} \rightw{\bbeta}^{\nes j-1} ,\quad j=\overline{2,J} \\
&\text{(d)}& \quad \rightw{\bbeta}^{\nes j} &= \leftw{\bR}_{j} \leftw{\bbeta}^{\nes j} + \rightw{\bT}_{\!\!j} \rightw{\bbeta}^{\nes j-1}, \quad j=\overline{2,J} \\
&\text{(e)}& \quad \leftw{\bbeta}^{\nes J} &= \rightw{\bR}_{J+1} \rightw{\bbeta}^{\nes J}, \\
&\text{(f)}& \quad \rightw{\bt} &=  \rightw{\bT}_{\!\!J+1} \rightw{\bbeta}^{\nes J},
\end{aligned} 
\end{equation}
where~$\bi$ signifies the source excitation i.e. the incident plane wave, whose angle of incidence is encoded in the featured reflection and transmission matrices. The solution of this system can be conveniently rewritten as 
\begin{equation} \label{sol3}    
\begin{aligned}
\leftw{\br} &= \leftw{\bT}_{\!1} \leftw{\bbeta}^{\!1} + \rightw{\bR}_1 \rightw{\bi}, \\ 
\rightw{\bt} &=  \rightw{\bT}_{\!\!J+1} \rightw{\bbeta}^{\nes J},
\end{aligned} \qquad\text{and}\qquad
\begin{aligned}
\rightw{\bbeta}^{\nes j} &= \rightw{\bmT}_j \, \rightw{\bmT}_{j-1} \ldots \rightw{\bmT}_1 \rightw{\bi}, \\
\leftw{\bbeta}^{\nes j} &= \rightw{\bmR}_{j+1} \rightw{\bbeta}^{\nes j} 
\end{aligned} \quad (j=\overline{1,J}) 
\end{equation}
 where $\rightw{\bmT}_j$ and $\leftw{\bmR}_j$ denote respectively the \emph{generalized} transmission and reflection matrices given by the recursive formulas 
\begin{equation} \label{sol4}    
\begin{aligned}
\rightw{\bmR}_{J+1} &= \rightw{\bR}_{J+1} \\
\rightw{\bmT}_{\!\!j} &= \big(\bmI - \leftw{\bR}_{j}\rightw{\bmR}_{j+1} \big)^{-1} \rightw{\bT}_{\!\!j}, \quad j=J,J\!-\!1,\ldots 1, \\
\rightw{\bmR}_j &= \rightw{\bR}_{j} + \leftw{\bT}_{\!\!j} \rightw{\bmR}_{j+1} \rightw{\bmT}_{\!\!j}, \quad \quad j=J,J\!-\!1,\ldots 1.  
\end{aligned}
\end{equation}
For an algebraic background behind~\eqref{sol4}, the reader is referred to~\citep{guzin2001} dealing with the scattering of plane waves by a layered system composed of homogeneous strips. We note for completeness that the dimension of~$\rightw{\bmT}_{\!\!1}$ is $N\times(2M\nes+\nes 1)$, that of~$\rightw{\bmR}_{1}$ is $(2M\nes+\nes1)\times(2M\nes+\nes 1)$, while the dimension of all other generalized (transmission and reflection) matrices is~$N\times N$. With~\eqref{sol3} in place, vectors~$\leftw{\br}$ and~$\rightw{\bt}$ synthesizing the reflected and transmitted fields are computed respectively via the first and the last of~\eqref{sol2bis}. 

Letting $j=\overline{2,J}$, $\rightw{\bmR}_{j}$ and~$\rightw{\bmT}_{j}$ can be understood as the ``condensed' matrices of reflection and transmission coefficients, respectively, for the \emph{right-going Bloch wave modes} impinging on~$\Gamma_{\!j}$. By design, they include the cumulative effect of multiple scattering generated by the heterogeneous layer between~$\xi_1=x^{(j)}$ and~$\xi_1=L$, expressed in terms of the reflected and transmitted Bloch wave modes emanating from~$\Gamma_{\!j}$. By contrast, $\rightw{\bmR}_1$ describes the cumulative reflection of the \emph{right-going Fourier wave modes} off the first interface~$\Gamma_1$, while $\rightw{\bmT}_{\!\!1}$ is responsible for the cumulative transmission, and thus conversion, of the right-going Fourier wave modes into the Bloch wave modes across~$\Gamma_1$. 

\subsection{The vanishing role of strongly evanescent Bloch waves} \label{strongeva}

As elucidated in Sec.~\ref{trunceigsp}, we ignore the contribution of strongly decaying Bloch waves by truncating the eigenspectrum of~\eqref{QEP} as to retain only the eigenvalues with ``small'' $|\Im(\kappa_n^j)|$. To justify this premise, it is instructive to recall~\eqref{expomat1} and factorize the featured reflection and transmission matrices as 
\begin{equation} \label{decay1}
\begin{aligned}
\leftw{\bT}_{\!\!j} &= \leftw{\mathds{T}}_{\!\!j} \leftw{\Ereg}^{\!j}_{\!\!j},  \quad 
& \rightw{\bT}_{\!\!j} &= \rightw{\mathds{T}}_{\!\!j} \rightw{\Ereg}^{\!j-1}_{\!\!j} \\
\leftw{\bR}_{j} &= \leftw{\mathds{R}}_{j} \leftw{\Ereg}^{\!j}_{\!\!j}, \quad 
&\rightw{\bR}_{j} &= \rightw{\mathds{R}}_{j} \rightw{\Ereg}^{\!j-1}_{\!\!j}
\end{aligned}, \qquad j=\overline{1,J\!+1} 
\end{equation}
where $\rightw{\Ereg}^{\!0}_{\!1}:= \bI$ and~$\leftw{\Ereg}^{\!J\!+1}_{\!\!J\!+1}:= \bI$. On the basis of~\eqref{solx1}, \eqref{solx3} and~\eqref{decay1}, one finds that~$\leftw{\mathds{T}}_{\!\!j}, \leftw{\mathds{R}}_{j}, \rightw{\mathds{T}}_{\!\!j}$ and~$\rightw{\mathds{R}}_{j}$ are \emph{independent} of the eigenvalues~$\kappa_n^j$. 

\paragraph{Right-going waves.} By way of~\eqref{decay1}, recursive formulas~\eqref{sol4} can be rewritten as 
\begin{equation} \label{sol5}    
\begin{aligned}
&\text{(a)}& \quad \rightw{\bmR}_{J+1} &= \rightw{\mathds{R}}_{\nes J\!+1} \rightw{\Ereg}^{\!J}_{\!\!J\!+1} \\
&\text{(b)}& \quad \rightw{\bmT}_{\!\!j} &= \big(\bmI - \leftw{\mathds{R}}_{j} \leftw{\Ereg}^{\!j}_{\!\!j}\rightw{\bmR}_{j+1} \big)^{-1} \rightw{\mathds{T}}_{\!\!j} \rightw{\Ereg}^{\!j-1}_{\!\!j}, \quad j=J,J\!-\!1,\ldots 1, \\
&\text{(c)}& \quad \rightw{\bmR}_j &= \rightw{\mathds{R}}_{j} \rightw{\Ereg}^{\!j-1}_{\!\!j} +  \leftw{\mathds{T}}_{\!\!j} \leftw{\Ereg}^{\!j}_{\!\!j} \rightw{\bmR}_{j+1} \rightw{\bmT}_{\!\!j}, \quad \quad j=J,J\!-\!1,\ldots 1.  
\end{aligned}
\end{equation}
From~(\ref{sol2bis}a), (\ref{sol2bis}f), (\ref{sol3}) and~(\ref{sol5}b), it is clear that 
\begin{equation} \label{sol5x}    
\begin{aligned}
\rightw{\bbeta}^{\nes 1} &= \rightw{\bmT}_{\!\!1}\! \rightw{\bi} \,=\, (\ldots) \rightw{\mathds{T}}_{\!\!1} \!\rightw{\bi}, \qquad &\rightw{\bbeta}^{\nes j} &= \rightw{\bmT}_{\!\!j} \rightw{\bbeta}^{\nes j-1} = (\ldots) \rightw{\Ereg}^{\!j-1}_{\!\!j}\rightw{\bbeta}^{\nes j-1},   \\
\leftw{\bbeta}^{\nes 1} &= \rightw{\bmR}_2 \rightw{\bbeta}^{\nes 1} = (\ldots) \rightw{\Ereg}^{\!1}_{\!\!2}  \rightw{\bbeta}^{\nes 1}, \qquad &\leftw{\bbeta}^{\nes j} &= \rightw{\bmR}_{j+1}\rightw{\bbeta}^{\nes j}  = 
(\ldots) \rightw{\Ereg}^{\!j}_{\!\!j+1}\rightw{\bbeta}^{\nes j},
\end{aligned} \quad j=\overline{2,J} \qquad 
\end{equation}
and 
\begin{equation} \label{IR1}
\begin{aligned}
\leftw{\br} &= \leftw{\bT}_{\!1} \leftw{\bbeta}^{\!1} + \rightw{\bR}_1 \! \rightw{\bi} \,=\, 
(\ldots) \rightw{\Ereg}^{\!1}_{\!\!2}  \rightw{\bbeta}^{\nes 1} + \rightw{\mathds{R}}_{1} \!\rightw{\bi}, \\
\rightw{\bt} &=  \rightw{\bT}_{\!\!J+1} \rightw{\bbeta}^{\nes J} \,=\, 
\rightw{\mathds{T}}_{\!\!J+1} \rightw{\Ereg}^{\!J}_{\!\!J+1} \rightw{\bbeta}^{\nes J},
\end{aligned}
\end{equation}
where $\rightw{\Ereg}^{\!j-1}_{\!\!j}$ ($j=\overline{2,J\!+1}$) is given by~\eqref{expomat1}. This demonstrates that the right-going Bloch eigenstates affiliated with large positive values of~$\Im(\rightw{\kappa}^{j-1}_{\!n})$, which generate vanishingly small diagonal entries $e^{i \rightw{\kappa}^{j-1}_{\!n}{w}^{(j-1)}}$ of $\rightw{\Ereg}^{\!j-1}_{\!\!j}$, can be omitted in the truncated set~$\rightw{\bbeta}^{\nes j-1}$ ($j=\overline{2,J\!+1}$) for their contribution to the right-hand sides in \eqref{sol5x}--\eqref{IR1} is negligible. 

\paragraph{Left-going waves.} Let $\leftw{\bbeta}^{\nes j}$ ($j=\overline{1,J}$) be computed from~\eqref{sol5}--\eqref{sol5x}. To expose the role of strongly attenuated left-going Bloch eigenstates contained in~$\leftw{\bbeta}^{\nes j}$, the above solution can be conveniently re-factorized~\citep{guzin2001} as 
\begin{equation} \label{sol5y}    
\rightw{\bbeta}^{\nes j} = \leftw{\bmR}_{\nes j} \leftw{\bbeta}^{\nes j} \,=\, (\ldots) \leftw{\Ereg}^{\!j}_{\!\!j} \leftw{\bbeta}^{\nes j},  \qquad j=\overline{1,J}  
\end{equation}
and
\begin{equation} \label{IR2}
\begin{aligned}
\leftw{\br} &= \leftw{\mathds{T}}_{\!\!1} \leftw{\Ereg}^{\!1}_{\!\!1} \leftw{\bbeta}^{\!1} + \rightw{\mathds{R}}_{1} \!\rightw{\bi}, \\
\rightw{\bt} &=  \rightw{\bT}_{\!\!J+1} \leftw{\bmR}_{\nes J} \leftw{\bbeta}^{\nes J} \,=\, (\ldots) \leftw{\Ereg}^{\!J}_{\!\!J} \leftw{\bbeta}^{\nes J},
\end{aligned}
\end{equation}
where 
\begin{equation} \label{sol7}    
\begin{aligned}
\leftw{\bmR}_{1} &= \leftw{\mathds{R}}_{1} \leftw{\Ereg}^{\!1}_{\!\!1}, \\
\leftw{\bmT}_{\!\!j} &= \big(\bmI - \rightw{\mathds{R}}_{j} \rightw{\Ereg}^{\!j-1}_{\!\!j} \leftw{\bmR}_{j-1} \big)^{-1} \leftw{\mathds{T}}_{\!\!j} \leftw{\Ereg}^{\!j}_{\!\!j}, \quad j=2,3,\ldots J\!+1, \\
\leftw{\bmR}_j &= \leftw{\mathds{R}}_{j} \leftw{\Ereg}^{\!j}_{\!\!j} + 
\rightw{\mathds{T}}_{\!\!j} \rightw{\Ereg}^{\!j-1}_{\!\!j} \leftw{\bmR}_{j-1} \leftw{\bmT}_{\!\!j}, \quad \quad j=2,3,\ldots J\!+1
\end{aligned}
\end{equation}
and $\leftw{\Ereg}^{\!j}_{\!\!j}$ ($j=\overline{1,J}$) is given by~\eqref{expomat1}. Analogous to the previous argument, we observe from \eqref{sol5y}--\eqref{IR2} that the left-going Bloch eigenstates having large negative values of~$\Im(\leftw{\kappa}^{j}_{\!n})$, which generate vanishingly small diagonal entries $e^{-i \leftw{\kappa}^{j}_{\!n}{w}^{(j)}}$ of $\rightw{\Ereg}^{\!j}_{\!\!j}$, can be omitted in the truncated set~$\leftw{\bbeta}^{\nes j}$ ($j=\overline{1,J}$) without having an appreciable effect on the solution. 

\subsection{Plug-and-play simulation paradigm} \label{plug-and-play}

Let~$\Omega$ denote a layer-cake system comprising~$J$ strips of $Q\!\leqslant\!J$ periodic media $\mathcal{M}_q$ ($q\shh \overline{1,Q}$) that is being designed for a particular facet of wave manipulation, e.g. rainbow trapping or wave steering. From the foregoing developments, one finds that for given $\omega$ and~$k_2$, $Q$ solutions of the quadratic eigenvalue problem~\eqref{QEP} -- computed respectively for~$\mathcal{M}_q$ ($q\shh \overline{1,Q}$) -- provide a basis for rapid exploration of the design space of $\Omega$ that includes:
\begin{itemize}[leftmargin=*]
\item The number of strips, $J$;
\item Strip makeups~$\mathfrak{m}_j\!\in\overline{1,Q}$, where $\mathfrak{m}_j\shh q$ designates  the $j$th strip ($j\shh \overline{1,J}$) as a cutout from~$\mathcal{M}_q$;
\item Strip widths $w^{(j)}=x^{(j+1)}-x^{(j)}$, $\, j\shh \overline{1,J}$;
\item Translations~$\bs_{\nes j}$ of the periodic medium~$\mathcal{M}_{\mathfrak{m}_j}$ relative to the unit cell window~$Y_{\!j}$, see Sec.~\ref{sectran}.
\end{itemize}
For instance, variations in the set $\{\mathfrak{m}_j\}$ can be designed to represent permutations (with or without repetitions) of the strips within $\Omega$, while the application of~$\bs_{\nes j}$ inherently affects the ($\xi_2$-dependent) impedance contrast between the laminae and so provides a major handle on the reflected and transmitted fields. 

In principle, each such alteration entails \emph{anew} (i) computation of the 1D Fourier series in~\eqref{fields2}; (ii) evaluation of the reflection and transmission matrices~\eqref{solx1}, and (iii) computation of the \emph{generalized} reflection and transmission matrices~\eqref{sol4}. In particular, the computational effort underpinning~(ii) and~(iii) entails low-dimensional matrix manipulations with the highest matrix dimension being $4M\!+\nes 2\geqslant 2N$, where $2N$ is the number of Bloch modes and $2M\!+\nes1$ is the number of Fourier modes.

As examined earlier, the linchpin of the proposed approach are the Bloch eigenvalues~$\kappa_n^q$ and eigenfunctions~$\phi_n^q$ of the QEVP~\eqref{QEP} (computed for given~$\omega, k_2$ and $M_q, \, q\shh \overline{1,Q}$) that are \emph{common} across the featured design space. Due to the fact that $\kappa_n^q\!\in\mathbb{C}$, the Bloch-wave expansion of the fields within each strip provides a natural basis for capturing the \emph{boundary layers} stemming from the interfaces~$\Gamma_{\!j}$, $j\shh \overline{1,J\!+\!1}$ in the form of (pure or leaky) evanescent waves. 

For sufficiently high number~$2N$ of the Bloch wave modes (which then drives the number $2M\!+\nes1$ of Fourier modes), the factorized Bloch approach is capable providing a high-fidelity solution of the scattering problem. Thus, the accuracy of the proposed scheme revolves around the choice of~$N$, that is for simplicity taken as common for all strips. On the basis of~\eqref{sol5x}--\eqref{IR2}, once the quadratic eigenvalues~$\kappa_n^q$ are computed for all $M_q$ $(q\shh \overline{1,Q}$), this parameter can be selected \emph{a priori} across the featured design space. Specifically, if one specifies $w_{\text{min}}\!>\!0$ as the lower bound on the allowable variations on~$w^{(j)}$, $N$ could be selected from the requirement that 
\begin{equation}\label{Ncrit}
|e^{i \rightw{\kappa}^{q}_{\!\nes 2N} w_{\text{min}}}| \,\leqslant\, \epsilon \, |e^{i \rightw{\kappa}^{q}_{\!\! N+1} w_{\text{min}}}| \quad \text{and} \quad
|e^{-i \leftw{\kappa}^{q}_{\!\!N} w_{\text{min}}}|
\,\leqslant\,  \epsilon \, |e^{-i \leftw{\kappa}^{q}_{\!\nes 1} w_{\text{min}}}|, \qquad q=\overline{1,Q}
\end{equation}
for some prescribed $\epsilon\shh o(1)$. The motivation behind~\eqref{Ncrit} is simple: we consider a $w_{\text{min}}$-wide strip of~$\mathcal{M}_q$, then $|e^{i \rightw{\kappa}^{q}_{\!\!N+1} w_{\text{min}}}|\leqslant 1$ signifies the reduction in amplitude of the \emph{least-decaying}, right-going wave due to propagation across the strip. Then, the first of~\eqref{Ncrit} guarantees that every discarded Bloch mode will have the amplitude reduction bounded from above by $\epsilon\,|e^{i \rightw{\kappa}^{q}_{\!\!N+1} w_{\text{min}}}|$. Analogous argument applies to the second of~\eqref{Ncrit}. Of course, other criteria are also applicable, for instance by requiring that $|e^{i \rightw{\kappa}^{q}_{\!\nes 2N} w_{\text{min}}}| \leqslant \epsilon$ and $|e^{-i \leftw{\kappa}^{q}_{\!\!N} w_{\text{min}}}|\leqslant \epsilon$, i.e. by introducing a common upper bound on the amplitude reduction of discarded Bloch modes for all strips. 

\section{Numerical results}\label{numres}

In what follows, numerical simulations are performed via NGSolve -- an open-source, Python-based finite element (FE) computational platform~\citep{NGSolve}. For generality, all physical parameters are presented in a dimensionless form – as normalized by  the characteristic size of the unit cell, the mass density of the background medium, and the shear modulus of the background medium. With reference to Sec.~\ref{plug-and-play}, we illustrate exploration of the design space that entails (i) permutation (without repetitions) of the strip makeups i.e. the set~$\{\mathfrak{m}_j\}$, and (ii) translations~$\bs_{\nes j}$ of the periodic media~$\mathcal{M}_{\mathfrak{m}_j}$ ($j\shh \overline{1,J}$), each featuring a square unit cell ($\ell^{(j)}\!= d= 1$). 

\subsection{Quadratic eigenspectrum}\label{numQEVP}

To illustrate the analysis, we first compute a truncated spectrum $\{\kappa_n\}$ of the the QEVP~(\ref{QEP}) for the example unit cell shown in Fig.~\ref{QEVP_FE}(a) by letting $(\omega, k_2)\shh  (2, 1)$ and meshing the unit cell with order-five ($p\shh 5$) triangular elements whose maximum characteristic size is $h\shh 0.05$, i.e.~$d/20$. The eigenvalues are obtained by first converting~(\ref{QEP}) to a larger-size, linear eigenvalue problem~\citep{Lack2019} and then solving the latter via Arnoldi iterations~\citep{Leho1998}. The first 200 ``unfiltered'' eigenvalues $\kappa_n\!\in\mathbb{C}$ are plotted in Fig.~\ref{QEVP_FE}(b), while their restriction to the essential strip~\eqref{essenstrip} is shown in Fig.~\ref{QEVP_FE}(c). From the displays, one observes that (i) the eigenvalues -- which spread radially with increasing~$n$ -- indeed appear as complex-conjugate pairs as demonstrated in Proposition~\ref{prop1}, and (ii) in the essential strip, there are only two real eigenvalues ($\kappa_n\!\in\mathbb{R}$) for the example under consideration; see also the comment at the end of Sec.~\ref{essential_strip}. 

In terms of the computational effort required to evaluate the first $\mathfrak{N}$ ``unfiltered'' eigenvalues, whose restriction to the essential strip yields~$2N$ eigenvalues $\rightw{\kappa}_n$ ($n\shh \overline{1,N}$) and $\leftw{\kappa}_n$ ($n\shh \overline{N\!+\! 1,2N}$), it is found from numerical experiments that the number of Arnoldi iterations, $\eta$, which ensures accurate computation of the sought set is $\eta\simeq 2\mathfrak{N}$. Numerical simulations also exposed a rule of thumb by which taking~$\mathfrak{N} = 20 N$ furnishes at least~$2N$ eigenvalues in the essential strip. 

\begin{figure}
\centering{\includegraphics [width=0.96\textwidth ]{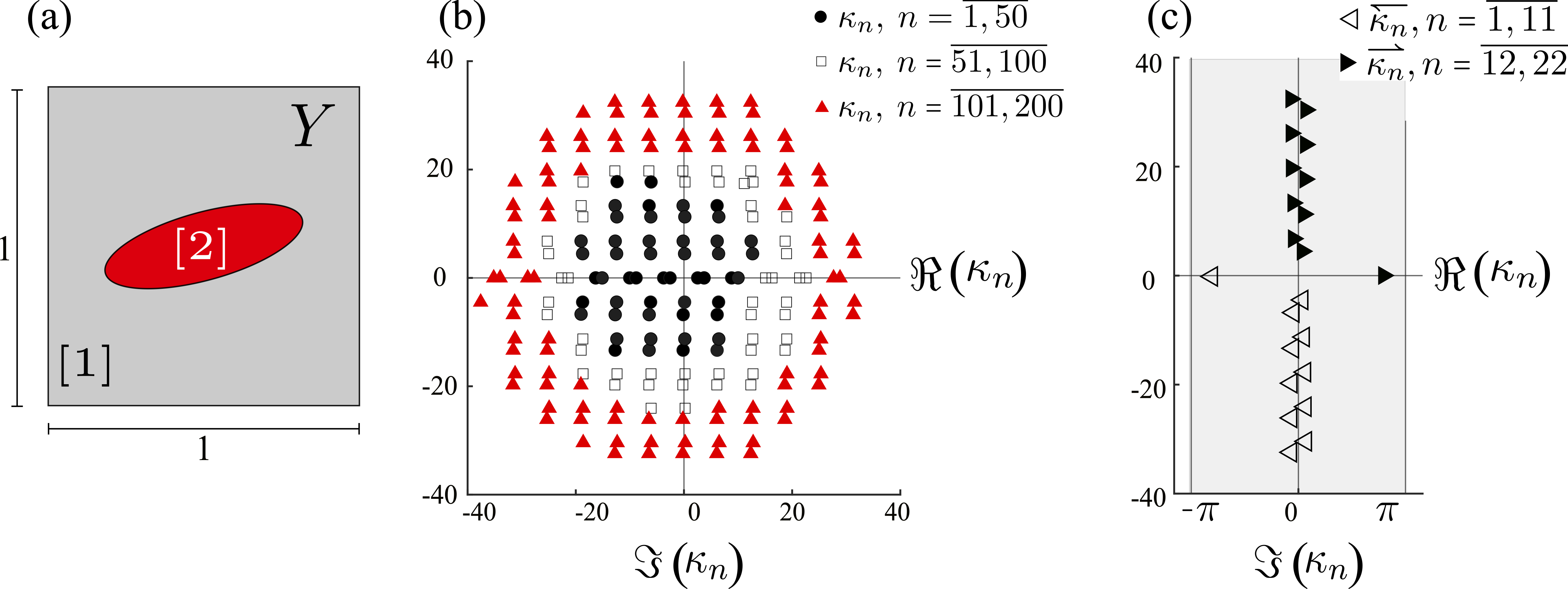}}
\caption{(a) Unit cell of periodicity featuring an elliptical inclusion with $G\jsup{1}=1$, $\rho\jsup{1}=1$, $G\jsup{2}=5$ and $\rho\jsup{2}=0.2$; (b) eigenvalues $\kappa_n$ of the QEVP~(\ref{QEP}) computed for $(\omega, k_2) = (2, 1)$; and (c) restriction of~(b) to the essential strip~\eqref{essenstrip} with $\kappa_n$ designated as either \emph{right-going} (filled triangles) or \emph{left-going} (hollow triangles).} \label{QEVP_FE}
\end{figure}

\subsection{Verification of the factorized Bloch solution}\label{numRT}

To compare the wavefield in a layer-cake system evaluated via the factorized Bloch solution against full-field FE simulations, we consider a 2D rainbow trap comprising six unit-cell-wide strips ($J=6$) extracted from mother periodic media $\mathcal{M}_j$ ($j=\overline{1,6}$) as shown in Fig.~\ref{RT_dispersion}(a). Following the usual approach in the design of rainbow traps~\citep{Zhu2013,Tian2017}, each $\mathcal{M}_j$ is designed such that its bandgap is shifted slightly ``upward'' relative to that featured by $\mathcal{M}_{j-1}$, see Fig.~\ref{RT_dispersion}(b). For completeness, columns 2 through 7 in Table~\ref{mat_properties} summarize the material properties of the unit cells $Y_j$ ($j=\overline{1,6}$).
\begin{figure}
\centering{\includegraphics [width=0.75\textwidth ]{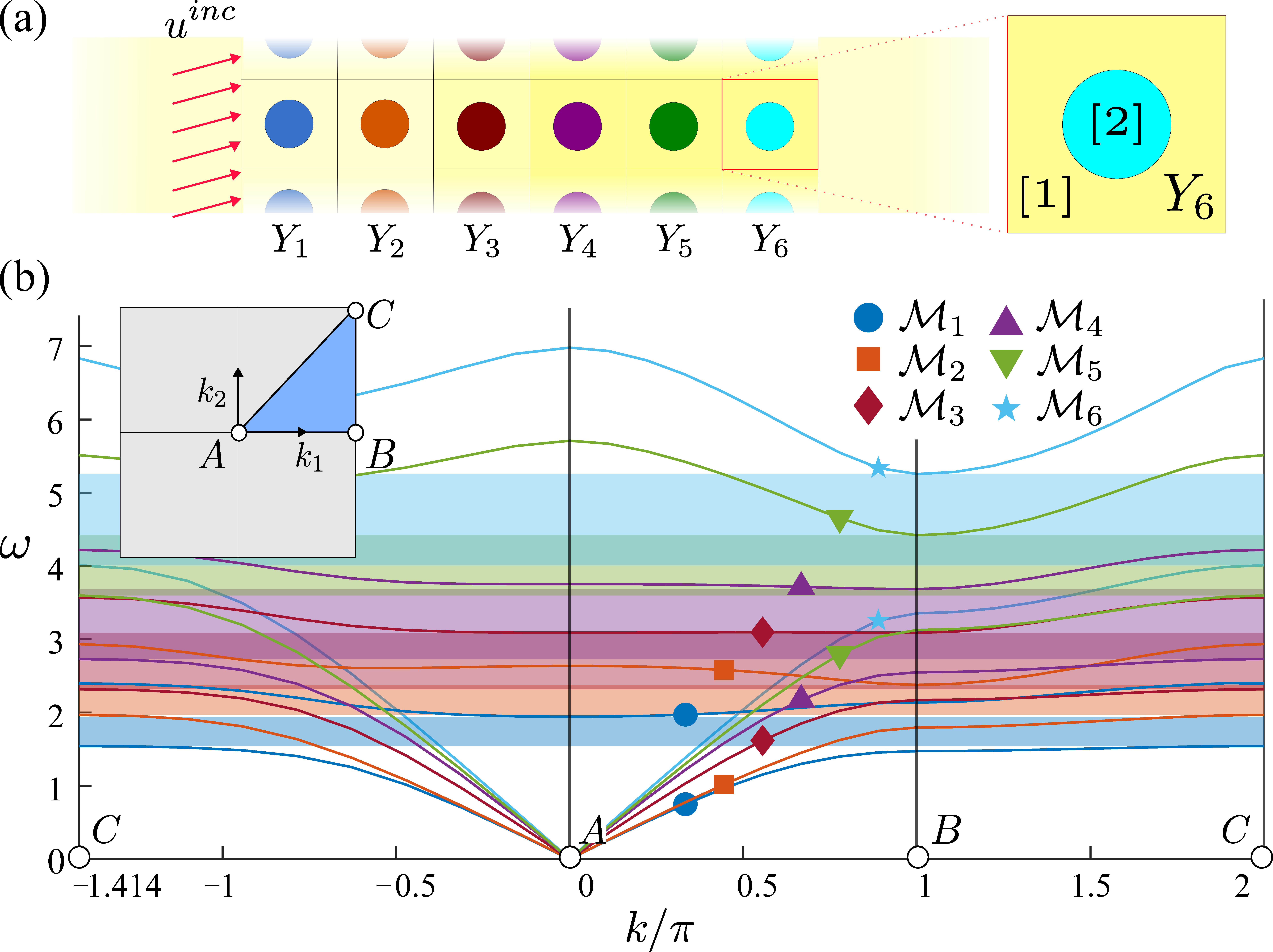}}
\caption{(a) Rainbow trap (1-periodic in the $\xi_2$-direction) with six layers sandwiched between two homogeneous half spaces, and (b) the first two dispersion branches for each of the mother periodic media $\mathcal{M}_j$  ($j=\overline{1,6}$) with the respective bandgaps indicated by shaded strips. Note that $\omega= 1$ resides in the passband of all mother periodic media, while $\omega= 2$, $\omega= 3$ and $\omega= 4$ are contained respectively by the bandgaps of the periodic media $j\nes\in\nes \{1,2\}$,  $j\nes\in\nes \{ 3,4\}$, and $j\nes\in\nes \{5,6\}$.}\label{RT_dispersion}
\end{figure}

\begin{table}[hb]
\caption{Piecewise-constant material properties of the periodic media featured in numerical simulations: rainbow trap example, Sec.~\ref{numRT} (columns 2--7), and optimal design of the layer-cake system, Sec.~\ref{numopt} (columns 8--10). Here, superscript [1] (resp.~[2]) refers to the material properties of the inclusion (resp. background).} 
\centering
\begin{tabular}{|c|c|c|c|c|c|c||c|c|c|} \hline
{Parameter}  & $j=1$ & $2$ & $3$ & $4$ & $5$ & $6$ & $j=1$ & $2$ & $3$ \\ \hline\hline
\rule{0pt}{10pt} {$\rho_j\jsup{1}$}    &  1  &  1  &  1  &  1  &  1  &  1 & 0.5 & 2 & 0.8 \\ \hline
\rule{0pt}{10pt} {$G_j\jsup{1}$} &  1  &  1  &  2  &  3  &  3  &  3 &  2  & 1 &  3  \\ \hline
\rule{0pt}{10pt} {$\rho_j\jsup{2}$}    &  5  &  5  &  6  &  7  &  7  &  7 &  5  & 5 &  5  \\ \hline 
\rule{0pt}{10pt} {$G_j\jsup{2}$} & 0.1 & 0.2 & 0.3 & 0.5 & 1.5 &  4 & 0.1 & 2 & 0.2 \\ \hline 
\end{tabular}
\label{mat_properties}
\end{table}

Assuming an oblique incident plane wave impinging ``from below'' on the rainbow trap at an angle $\theta=\pi/6$ relative to the $\xi_1$-axis for which $\bk= \omega(\sqrt{3}/2,1/2)$, time-harmonic FE simulations are performed at four frequency-wavenumber combinations
\[
(\omega, k_2)  = (\omega,\omega/2), \quad \omega = \overline{1,4}. 
\] 
To reduce the computational effort, the FE simulations are designed to solve for the \emph{factored} displacement field $\tilde{u}(\bxi) = u(\bxi) \, e^{-i k_2 \xi_2}$ as in~\eqref{factusig}, which satisfies the field equation
\[
\nabla_{\!\bk'}  \sip (G(\bxi) \nabla_{\!\bk'} \tilde{u}) + \omega^2 \rho(\bxi) \hh \tilde{u} \,=\, 0, \qquad \bk'=(0, k_2).
\]
The advantage of such an approach is that $G,\rho,\tilde{u}^{inc} \!= e^{i k_1\xi_1}$, and~$\tilde{u}$ are all $d$-periodic in the $\xi_2$-direction, whereby the simulation domain can be reduced to a single strip of ``height'' $d\shh 1$, subject to the periodic boundary conditions along its top and bottom boundary as shown in Fig.~\ref{RT_validation}(a). The left and the right half-space are each modeled by a homogeneous strip ($G_\circ\shh 1,\rho_\circ\shh 1$) of width $10$, followed by an absorbing boundary implemented via perfectly matched layers~\citep{NGSolve}. 

With reference to Fig.~\ref{RT_validation}(a), the factorized incident wavefield $\tilde{u}^{inc}$ is generated by a set of~$P$ equidistant point sources placed along a vertical line in the middle of the left homogeneous strip. The Dirac delta function affiliated with each point source is approximated by Gaussian distribution with variance $\gamma^2$. The parameter $P$ (resp.~$\gamma$) is chosen such that $(P+1) \lambda$ (resp. $\gamma/\lambda$) remains constant at all frequencies, where $\lambda\shh 2\pi/\omega$ denotes the wavelength in each half-space. For the present example, we select $(P+1)\lambda=200$ and $\gamma/\lambda=0.0025$ which ensures veracious approximation of a normally-incident plane wave. The simulation domain is meshed with triangular elements ($p=3$) of maximum characteristic size $h\shh d/20$ which is sufficient to accurately simulate the factorized wave motion $\tilde{u}$ for~$\omega\in[1,4]$. The four panels in Fig.~\ref{RT_validation}(b) plot respectively the total motion $u(\bxi) = \tilde{u}(\bxi) \, e^{i k_2 \xi_2}$ at $\omega=\overline{1,4}$. As can be seen from the display, the factorized Bloch-wave (FB) solution is practically indistinguishable from the FE simulations in all cases. For excitation frequencies $\omega=\overline{2,4}$ that are contained within the union of the band gaps $\omega\in(1.5,5.5)$ of the mother periodic media (see Fig.~\ref{RT_dispersion}(b)), one also observes the effect of rainbow trapping where the transmitted waves in the right half-space are largely diminished. 

\begin{figure}[ht!]
\centering{\includegraphics [width=0.8\textwidth]{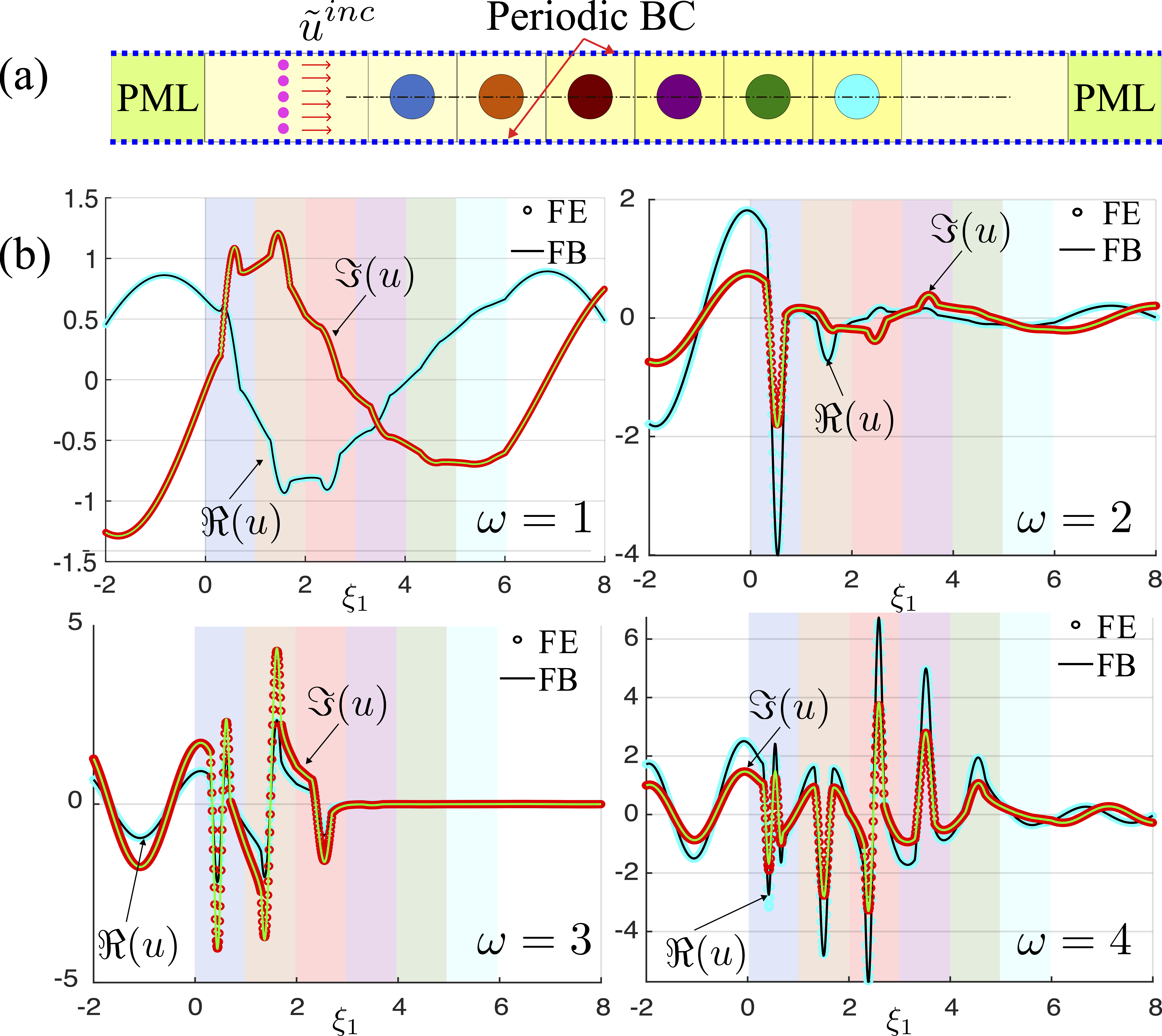}}
\caption{(a) Numerical model of the rainbow trap ($0\!<\!\xi_1\!<\!6$) and adjacent half-spaces used for full-scale FE simulations. The (top and bottom) dashed lines indicate periodic boundary conditions, and the dots in the left homogeneous medium depict point sources used to generate the normally-incident (factorized) plane wave. The simulation domain is sandwiched between two absorbing boundaries implemented as perfectly matched layers (PMLs). (b) Comparison of the factorized Bloch (FB) solution versus FE simulations along the centerline of the simulation domain, computed for $\omega= \overline{1,4}$.} \label{RT_validation}
\end{figure}
\nopagebreak
For completeness, we recall that computing the factorized Bloch-wave solution at each~$\omega$ entails calculating a truncated quadratic eigenspectrum $\{\kappa^q_n, \phi^q_n\}_{n=1}^{2N}$ (see Sec.~\ref{numQEVP}) for each mother periodic medium~$\mathcal{M}_q$ $(q=\overline{1,6})$, and then evaluating the factorized wavefield as in Sec.~\ref{FacWav}.  With reference to the discussion in Sec.~\ref{plug-and-play}, in simulations we take $N=9$ which, for the problem under consideration with $w_{\text{min}}=1$, ensures that all ``weakly-evanescent'' Bloch modes captured by the respective conditions
\[
|e^{i \rightw{\kappa}^{q}_{\!\nes 2N}}| \leqslant \epsilon \quad \text{and} \quad  |e^{-i \leftw{\kappa}^{q}_{\!\!N}}| \leqslant \epsilon, \qquad \epsilon=10^{-2} \quad (q=\overline{1,6})
\]
are included in the expansion. Guided by the requirement $4M\!+\nes 2\geqslant 2N$ elucidated in Remark~\ref{MN-relat}, the factorized Bloch solution in Fig.~\ref{RT_validation} is computed with $M\!=4$. From numerical experiments, it is found that having either even-determined or slightly-overdetermined system in that $4M\!+\nes 2\nes-\nes 2N\ll 4M\!+\nes 2$ provides for an optimal choice of~$M$. Specifically, taking $4M\!+\nes 2$ to be notably larger than $2N$ impedes the ability of the available $2N$ Bloch waves to match the low-Fourier-order (displacement and traction) continuity conditions between strips with sufficient accuracy. 

\subsection{Optimal design of a layer-cake system} \label{numopt}

To illustrate the utility of the plug-and-play simulation paradigm toward efficient exploration of the design space of layer-cake periodic systems, we consider a system consisting of three strips extracted respectively from the mother periodic media~$\mathcal{M}_q$ ($q\shh \overline{1,3}$) and pursue an ``optimal design'' of the so-constrained system that, for a given incident wave, minimizes the amplitude of body waves transmitted into the right half-space. Recalling~\eqref{usigmp1}, we specifically seek to minimize the amplitude of the propagative (i.e.~non-evanescent) component of~$\tilde{u}^+$ and thus~$u^+$ given by 
\begin{equation} \label{Tcal}
\mathcal{T} :=\,  \Big(\sum_{m=-M}^{M} c_m\, |t_m|^2\Big)^{1/2}, \qquad 
c_m \,=\, \left\{ \begin{array}{ll}
1, &  k_{1m}^+\in \mathbb{R} \\ 0, & k_{1m}^+\notin \mathbb{R} \end{array} \right. .
\end{equation}
For completeness, we note that the amplitude $\mathcal{R}$ of the propagative reflected field is obtained by replacing~$t_m$ and~$k_{1m}^+$ in~\eqref{Tcal} respectively by~$r_m$ and~$k_{1m}^-$, see~\eqref{usigmp1}. 

We start with the three-strip system shown in Fig.~\ref{optimal_surfaces}(a), whose material properties are given in the last three columns of Table~\ref{mat_properties} (recall that $G_\circ\shh 1$ and $\rho_\circ\shh 1$). Letting $J\shh Q\shh 3$, we restrict the design space to (i) permutation of the strip makeups $\mathfrak{m}_j\in\overline{1,Q}$ ($j\shh\overline{1,J}$) without repetitions, and (ii) translations~$\bs_{\nes j}$ of the periodic medium~$\mathcal{M}_{\mathfrak{m}_j}$ relative to the unit cell window~$Y_{\!j}$. Such design constraint amounts to (among other things) keeping the volume fractions of all material components in the  system constant. 

We first consider permutations of the mother periodic media, and identify the optimal configuration $C_1$ (obtained by permutations without translations) shown in Fig.~\ref{optimal_surfaces}(b) that reduces the initial value of~$\mathcal{T}$ by 14.6\%. To illustrate the effect of engaging~$\bs_j$, for each permutation we compute the variation of~$\mathcal{T}$ under translations of the mother periodic medium in a \emph{single strip} of the system; for instance we let~$\bs_1$ vary over $[0,1)\times[0,1)$ while keeping $\bs_2\shh\bs_2\shh\bzero$, and so on. The resulting diagrams are plotted in Fig.~\ref{optimal_surfaces}(d), where the three rows of surfaces (top to bottom) depict the variations of~$\mathcal{T}$ vs.~$\bs_1$ through~$\bs_3$ obtained by covering the domain $[0, 1) \times [0, 1)$ with $100 \times 100$ uniformly-spaced grid points. By selecting the layer-cake configuration that produces the minimum $\mathcal{T}$-value between these 18 diagrams, we identify the optimal configuration~$C_2$ given by (i) the permutation shown in Fig.~\ref{optimal_surfaces}(b), and (ii) translation $\bs_2\shh(0.39,0.25)$ of the mother periodic medium in the second strip, which reduces the initial $\mathcal{T}$-value by 99.4\%. 

We next focus on the permutation featured by~$C_2$ ($\mathfrak{m}_1\shh 2, \mathfrak{m}_2\shh 1, \mathfrak{m}_3\shh 3$), and compute the variation of the transmission coefficient~$\mathcal{T}$ under \emph{simultaneous} translations $\bs_j \nes\in\nes  [0,1) \times [0,1), \; j=\overline{1, 3}$ of the  mother periodic media for all three strips. In this case, we cover the translation domain $[0,1) \times [0,1)$ for each strip by a sparse grid of $7 \times 7$ uniformly-spaced translations, which produces 117649 combinations. The optimal configuration~$C_3$ identified in this way, shown in Fig.~\ref{optimal_surfaces}(c), performs similar to~$C_2$ and reduces the initial~$\mathcal{T}$-value by 99.4\%; it features the translations $\bs_1\shh\bzero, \bs_2\shh(0.39,0.25)$ and~$\bs_3\shh(0.79,0.60)$. For completeness, Table~\ref{optim_coeff} lists the values of the transmission coefficient~$\mathcal{T}$ and reflection coefficient~$\mathcal{R}$ for each optimal configuration. Here it is worth noting that $\mathcal{T}^2\!+\mathcal{R}^2 = 1+\delta$ with $|\delta|<\num{4e-4}$ for configurations $C_{0/1}$, and \mbox{$|\delta|<\num{1.3e-2}$} for configurations $C_{2/3}$. In the latter case, diminished accuracy of the solution stems from the translation-induced discontinuities in material properties along either side of the strip interfaces (see Fig.~\ref{optimal_surfaces}(c) for example), which in principle require higher~$M$ (and thus higher~$N$) for high-fidelity enforcement of the interfacial conditions. The reason behind the identity $\mathcal{T}^2\!+\mathcal{R}^2 = 1$ is exposed by considering the power flow of (horizontally-propagating) factorized fields $\tilde{u}^{inc}$, $\tilde{u}^r=\tilde{u}^--\tilde{u}^{inc}$ and $\tilde{u}^t=\tilde{u}^+$ within the strip in Fig.~\ref{RT_validation}(a) through sections $\xi_1\shh \pm \infty$, where the amplitudes of the three fields respectively equal 1, $\mathcal{R}$ and~$\mathcal{T}$ owing to a vanishing contribution of the evanescent waves there.

\begin{figure}[ht!]
\centering{\includegraphics[width=1.0\textwidth]{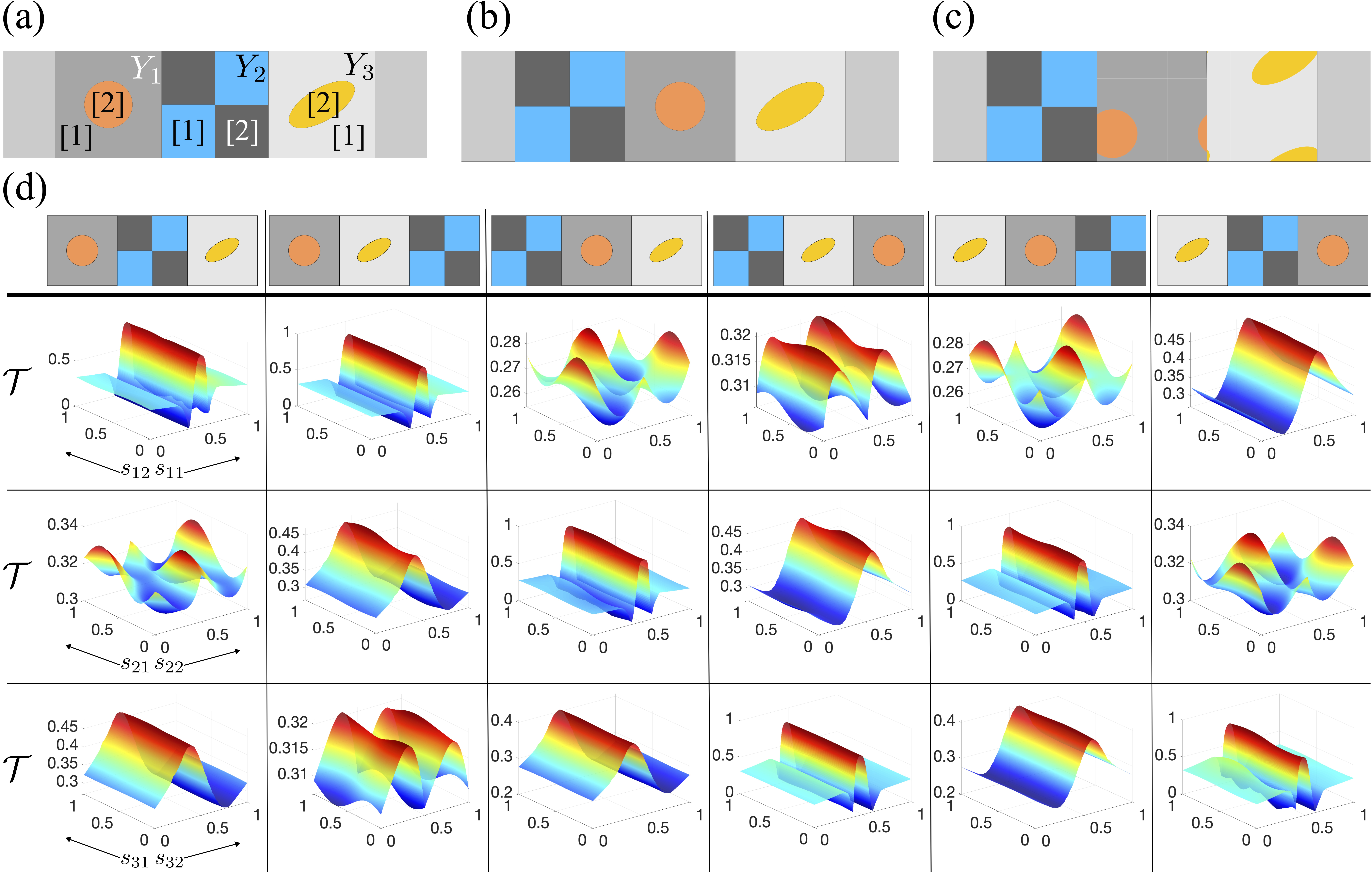}}
\caption{(a) Initial layer-cake system, intended to suppress wave transmission; (b) optimal permutation without translations; (c) optimal configuration including translations~$\bs_j$ ($j\!=\!\overline{1,3}$); and (d) transmission surfaces ($\mathcal{T}$ vs.~$\bs_j$) generated for each permutation by varying exclusively $\bs_1$ (top row), $\bs_2$ (middle row), and $\bs_3$ (bottom row).} \label{optimal_surfaces}
\end{figure}

\begin{table}[hb]
\caption{Transmission ($\mathcal{T}$) and reflection ($\mathcal{R}$) coefficients for the layer-cake configurations in  Figs.~\ref{optimal_surfaces}(a)--(c) generated by permutations and window translations of the mother periodic media.} 
\centering
\begin{tabular}{|l|c|c|c|} \hline
Layer-cake configuration & Fig. & $\mathcal{T}$ & $\mathcal{R}$ \\ \hline\hline 
Initial $C_0$ & \ref{optimal_surfaces}(a) &   0.3231 & 0.9464 \\ \hline
Optimal $C_1$: permutations only & \ref{optimal_surfaces}(b) & 0.2758 & 0.9614 \\ \hline
Optimal $C_2$: permutations w/\emph{individual} translations $\bs_j$ ($j\nes\in\nes \{1,2,3\})$ & ----- &  0.0019 & 1.0063 \\ \hline
Optimal $C_3$: permutation~$C_2$ w/\emph{simultaneous} translations $\bs_j$ ($j\shh\overline{1,3}$) &  \ref{optimal_surfaces}(c) & 0.0018 & 1.0063  \\ \hline 
\end{tabular}
\label{optim_coeff}
\end{table}

\subsection{Discussion}  With the use of suitable search techniques, e.g. genetic algorithms, the design space examined above can be readily expanded as examined in Sec.~\ref{plug-and-play} to include (a) permutations with repetitions, and (b) alterations of the strip widths $w^{(j)}$, $\hh j\shh \overline{1,J}$. In such situations, the minimization could be endowed with a penalty term entailing e.g. the overall thickness, $\sum_{j=1}^{J} w^{(j)}$, of the layer-cake system. 

Concerning the computational savings brought about by the factorized Bloch (FB) paradigm, we note that a single FE simulation of wave scattering by the rainbow trap in Fig.~\ref{RT_validation}(a) for $\,\omega\in[1,4]$ takes $7$ seconds on a reference laptop computer. By contrast, the upfront effort of solving the QEVP~\eqref{QEP} for each of the mother periodic media $\mathcal{M}_j$ ($j\shh \overline{1,6})$ takes~$3$ minutes, while FB evaluation of the scattering problem for a single trial configuration in Sec.~\ref{numopt} takes~$0.3$ seconds. As a result, the FB approach provides $23\times$ computational speedup in performing 117649 simulations  behind the optimal configuration~$C_3$ shown in Fig.~\ref{optimal_surfaces}(c). Inherently, this speedup factor increases by decades when considering (i) an expanded design space (by allowing for variable strip widths and permutations with repetitions), and (ii) higher excitation frequencies as they disproportionately increase the cost of FE simulations. 

Beyond rainbow trapping, optimal design of layer-cake systems could also target  functionalities such as wave redirection (by metasurfaces) and energy harvesting. To cater for such applications, the factorized Bloch solution could for instance be deployed to expose the power flow through a layer-cake system. Specifically, on substituting~\eqref{factusig}--\eqref{fields2} into~\eqref{PV1} one can readily compute the averaged Poynting vector in each unit cell. An example of such calculation is shown in Fig.~\ref{poynt}, which plots the distribution of averaged Poynting vectors in the rainbow trap shown in Fig.~\ref{RT_dispersion}(a) at four excitation frequencies. The results in particular demonstrate that in the strips featuring significant $\langx\boldsymbol{\mathcal{P}}_{\!\!\text{\tiny T}}\rangx$ magnitudes, the power flow inside the operating range of the rainbow trap ($\omega\shh 2,3,4$) is largely \emph{redirected}, in the form of evanescent waves, either ``up'' or ``down'' the layer-cake system. On recalling the upward direction of the incident plane wave shown in Fig.~\ref{RT_dispersion}(a), the redirection of power flow ``down'' the layer-cake system at $\omega=4$ can be loosely related to the phenomenon of \emph{negative refraction} in that the $\xi_2$-component of the Poynting vector in lamina $j\shh 4$ of the rainbow trap (the principal energy diverter) reverses sign relative to that of the incident wave. More generally, we observe from both  Fig.~\ref{RT_validation} and Fig.~\ref{poynt} that the laminae with large displacement or power flow magnitudes do not necessarily feature band gaps that contain the excitation frequency; a behavior that exposes the limitations of rainbow trap (or energy harvester) design based exclusively of the spectral properties of the mother periodic media.

\begin{figure}
\centering{\includegraphics [width=0.65\textwidth ]{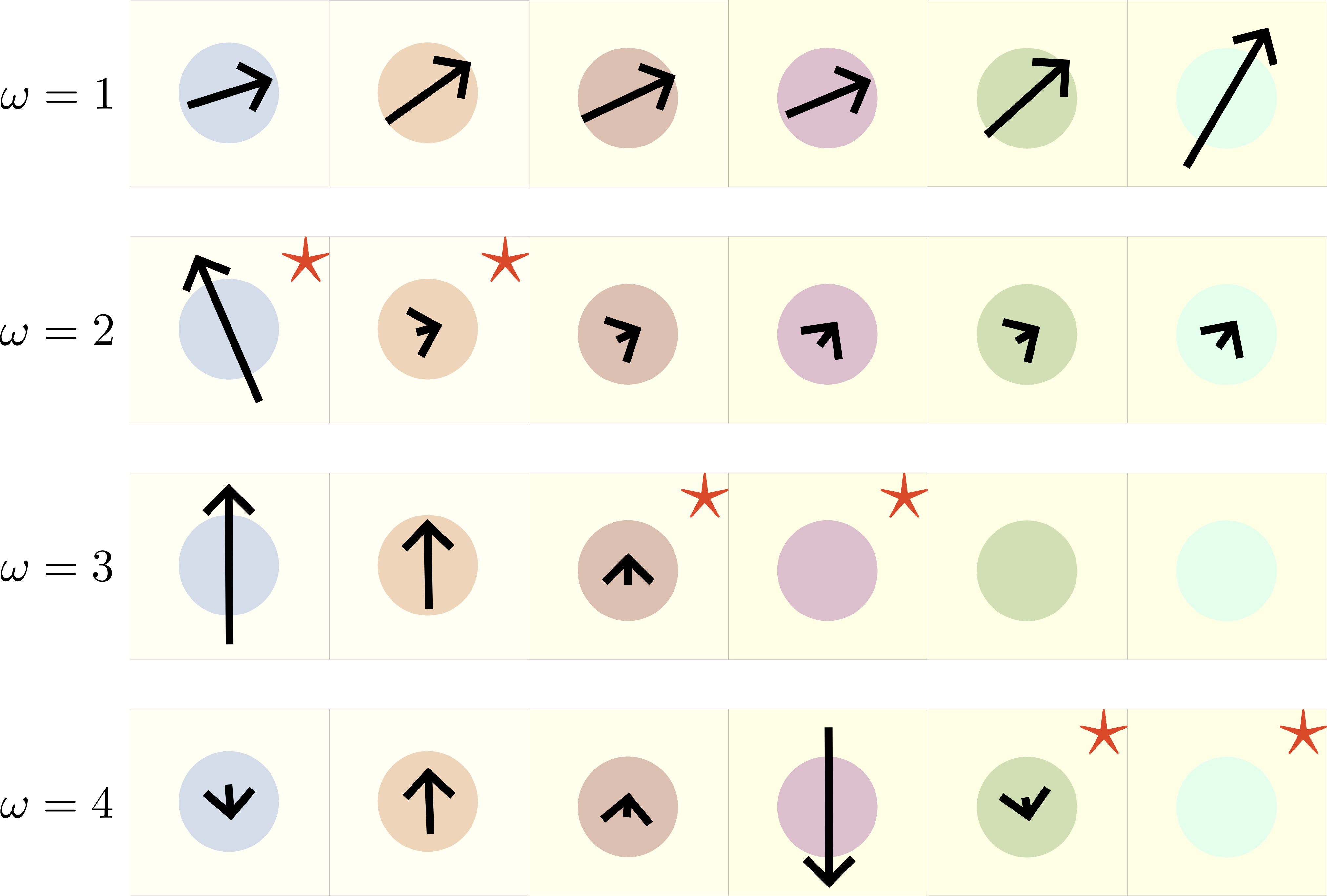}}
\caption{Averaged Poynting vector $\langx\boldsymbol{\mathcal{P}}_{\!\!\text{\tiny T}}\rangx$ (drawn to scale) in each unit cell of the rainbow trap featured in Fig.~\ref{RT_dispersion}. In the diagrams, unit cells whose respective bandgaps contain the excitation frequency are indicated by the ``$\star$'' symbol.} \label{poynt}
\end{figure}

\section{Summary and outlook}

In this work, we consider scattering of scalar plane waves by a heterogeneous layer that is periodic in the direction parallel to its boundary. On describing the latter as a composite made of periodic strips i.e. laminae, we pursue a solution of the scattering problem by invoking the concept of propagator matrices and that of Bloch eigenstates featured by the unit cell of each lamina. The featured Bloch eigenstates solve the quadratic eigenvalue problem (QEVP) that seeks a complex-valued wavenumber normal to the layer boundary given (i) the excitation frequency, and (ii) real-valued wavenumber parallel to the boundary -- that is conserved throughout the system. Spectral analysis of the QEVP reveals sufficient conditions for discreteness of the eigenspectrum and the fact that all eigenvalues come in complex-conjugate pairs. By way of the factorization brought about by the propagator matrix approach, we demonstrate explicitly that the contribution of individual eigenvalues (and affiliated eigenmodes) to the solution diminishes exponentially with absolute value of their imaginary part, which then forms a systematic platform for truncation of the factorized Bloch solution. The proposed methodology caters for the optimal design of rainbow traps, energy harvesters and metasurfaces, whose ability to manipulate waves is controlled not only by the individual dispersion and impedance characteristics of the component laminae, but also by the ordering and generally fitting of the latter into a composite layer. By providing a natural basis for the expansion of wavefields in layer-cake periodic systems, the factorized Bloch solution can be viewed as a model order reduction that features fast convergence and modular evaluation structure facilitating the optimal design of this class of wave manipulation devices. Additional features of the proposed approach include e.g. natural distinction between the propagating and evanescent waves, and effective tracking of power flow within a layer-cake system. Inherently, this study lends itself to a variety of extensions, including an account for periodic lattices (i.e. periodically ``perforated'' media), elastic waves, and the like treatment of 3D wave scattering by periodic layers comprising one-dimensional sequences of the ``sheets'' of 3D periodic media. 



\bibliographystyle{plain}\bibliography{refs} 

\begin{thebibliography}{10}

\bibitem{Asso2018}
B.~Assouar, B.~Liang, Y.~Wu, Y.~Li, J-C. Cheng, and Y.~Jing.
\newblock Acoustic metasurfaces.
\newblock {\em Nature Rev. Mater.}, 3:460--472, 2018.

\bibitem{Auld1973}
BA~Auld.
\newblock {\em Acoustic Fields and Waves in Solids, Vol.~1}.
\newblock John Wiley \& Sons, 1973.

\bibitem{Benn2019}
L.G. Bennetts, M.A. Peter, and R.V. Craster.
\newblock Low-frequency wave-energy amplification in graded two-dimensional
  resonator arrays.
\newblock {\em Phil. Trans. Roy. Soc. A}, 377:20190104, 2019.

\bibitem{Bloch1929}
F.~{B}loch.
\newblock Über die quantenmechanik der elektronen in kristallgittern.
\newblock {\em Zeitschrift für Physik A}, 52:555--600, 1929.

\bibitem{celli2019}
P.~Celli, B.~Yousefzadeh, B.~Daraio, and S.~Gonella.
\newblock Bandgap widening by disorder in rainbow metamaterials.
\newblock {\em Appl. Phys. Lett.}, 114, 2019.

\bibitem{chap2020}
G.J. Chaplain, J.M. De~Ponti, G.~Aguzzi, A.~Colombi, and R.V. Craster.
\newblock Topological rainbow trapping for elastic energy harvesting in graded
  su-schrieffer-heeger systems.
\newblock {\em Phys. Rev. Appl.}, 14:054035, 2020.

\bibitem{Chen2016}
H-T. Chen, A.J. Taylor, and NN. Yu.
\newblock A review of metasurfaces: physics and applications.
\newblock {\em Reports Progr. Phys.}, 79:076401, 2016.

\bibitem{Cumme2016}
S.A. Cummer, J.~Christensen, and A.~Alu.
\newblock Controlling sound with acoustic metamaterials.
\newblock {\em Nature Rev. Mat.}, 1:1--13, 2016.

\bibitem{fliss2016}
S~Fliss and P~Joly.
\newblock Solutions of the time-harmonic wave equation in periodic waveguides:
  asymptotic behaviour and radiation condition.
\newblock {\em Arch. Rat. Mech. Anal.}, 219:349--386, 2016.

\bibitem{Floq1883}
G.~Floquet.
\newblock Sur les équations différentielles linéaires à coefficients
  périodiques.
\newblock {\em Annales Scientifiques de l'École Normale Supérieure},
  12:47--88, 1883.

\bibitem{guzin2001}
BB~Guzina and RYS Pak.
\newblock On the analysis of wave motions in a multi-layered solid.
\newblock {\em Quart. J. Mech. Appl. Math.}, 54:13--37, 2001.

\bibitem{jiao2023}
P.~Jiao, J.~Mueller, J.R. Raney, X.~Zheng, and A.H. Alavi.
\newblock Mechanical metamaterials and beyond.
\newblock {\em Nature Commun.}, 14:6004, 2023.

\bibitem{Kadi2019}
M.~Kadic, G.W. Milton, M.~van Hecke, and M.~Wegener.
\newblock 3d metamaterials.
\newblock {\em Nature Rev. Phys.}, 1:198--210, 2019.

\bibitem{kulpe2014det}
J.A. Kulpe, M.J. Leamy, and K.G. Sabra.
\newblock Determination of acoustic scattering from a two-dimensional finite
  phononic crystal using {B}loch wave expansion.
\newblock In {\em Int. Design Eng. Tech. Conf. and Comp. Inf. Eng. Conference},
  volume 46414, page V008T11A075. ASME, 2014.

\bibitem{Kulpe2014}
J.A. Kulpe, K.G. Sabra, and M.J. Leamy.
\newblock {B}loch-wave expansion technique for predicting wave reflection and
  transmission in two-dimensional phononic crystals.
\newblock {\em J. Acoust. Soc. Amer.}, 135:1808--1819, 2014.

\bibitem{Kulpe2015}
J.A Kulpe, K.G. Sabra, and M.J. Leamy.
\newblock A three-dimensional {B}loch wave expansion to determine external
  scattering from finite phononic crystals.
\newblock {\em J. Acoust. Soc. Amer.}, 137:3299--3313, 2015.

\bibitem{Lack2019}
C.~Lackner, S.~Meng, and P.~Monk.
\newblock Determination of electromagnetic {Bloch} variety in a medium with
  frequency-dependent coefficients.
\newblock {\em J. Comp. Appl. Math.}, 358:359--373, 2019.

\bibitem{lama2018}
A.~Lamacz and B.~Schweizer.
\newblock Outgoing wave conditions in photonic crystals and transmission
  properties at interfaces.
\newblock {\em ESAIM: Math. Mod. Num. An.}, 52:1913--1945, 2018.

\bibitem{lamacz2018outgoing}
A.~Lamacz and B.~Schweizer.
\newblock Outgoing wave conditions in photonic crystals and transmission
  properties at interfaces.
\newblock {\em ESAIM: Math. Mod. Num. An.}, 52(5):1913--1945, 2018.

\bibitem{laude2009evanA}
V.~Laude, Y.~Achaoui, S.~Benchabane, and A.~Khelif.
\newblock Evanescent {B}loch waves and the complex band structure of phononic
  crystals.
\newblock {\em Phys. Rev. B}, 80:092301, 2009.

\bibitem{laude2009evanB}
V.~Laude, B.~Aoubiza, Y.~Achaoui, S.~Benchabane, and A.~Khelif.
\newblock Evanescent {B}loch waves in phononic crystals.
\newblock In {\em Photonic and Phononic Crystal Materials and Devices IX},
  volume 7223, pages 57--65. SPIE, 2009.

\bibitem{laude2011bloch}
V.~Laude, R.P. Moiseyenko, S.~Benchabane, and N.F. Declercq.
\newblock {B}loch wave deafness and modal conversion at a phononic crystal
  boundary.
\newblock {\em AIP Advances}, 1, 2011.

\bibitem{Leho1998}
R.B. Lehoucq, D.C. Sorensen, and C.~Yang.
\newblock {\em {ARPACK} users' guide: {S}olution of large-scale eigenvalue
  problems with implicitly restarted {A}rnoldi methods}.
\newblock SIAM, 1998.

\bibitem{Liu2011}
Y.~Liu and X.~Zhang.
\newblock Metamaterials: a new frontier of science and technology.
\newblock {\em Chem. Soc. Rev.}, 40:2494--2507, 2011.

\bibitem{Markus1988}
A.S. Markus.
\newblock {\em Introduction to the Spectral Theory of Polynomial Operator
  Pencils. Translations of Mathematical Monographs, 71}.
\newblock American Mathematical Society, 1988.

\bibitem{Sriva2020}
A.A. Mokhtari, Y.~Lu, Q.~Zhou, A.V. Amirkhizi, and A.~Srivastava.
\newblock Scattering of in-plane elastic waves at metamaterial interfaces.
\newblock {\em Int. J. Eng. Sci.}, 150:103278, 2020.

\bibitem{Schi2008}
W.H.A. Schilders, H.A. Van~der Vorst, and J.~Rommes.
\newblock {\em Model Order Reduction: Theory, Research Aspects and
  Applications}, volume~13.
\newblock Springer, 2008.

\bibitem{NGSolve}
J~Sch{\"o}berl.
\newblock C++11 {I}mplementation of {F}inite {E}lements in {NGSolve}.
\newblock {\em {Institute for analysis and scientific computing, Vienna
  University of Technology}}, 30, 2014.

\bibitem{Tian2017}
Z.~Tian and L.~Yu.
\newblock Rainbow trapping of ultrasonic guided waves in chirped phononic
  crystal plates.
\newblock {\em Sci. Reports}, 7:40004, 2017.

\bibitem{Vesel1964}
V.G. Veselago.
\newblock Electrodynamics of substances with simultaneously negative~$\epsilon$
  and~$\mu$.
\newblock {\em Usp. Fiz. Nauk}, 92:517--26, 1964.

\bibitem{willis2016}
J.R. Willis.
\newblock Negative refraction in a laminate.
\newblock {\em J. Mech. Phys. Solids}, 97:10--18, 2016.

\bibitem{Zane2020}
F.~Zangeneh-Nejad, A.~Al{\`u}, and R.~Fleury.
\newblock Topological wave insulators: a review.
\newblock {\em Compt. Rend. Phys.}, 21:467--499, 2020.

\bibitem{Zhu2013}
J.~Zhu, Y.~Chen, X.~Zhu, F.J. Garcia-Vidal, X.~Yin, W.~Zhang, and X.~Zhang.
\newblock Acoustic rainbow trapping.
\newblock {\em Sci. Reports}, 3:1728, 2013.

\end{thebibliography}

\end{document}